\newcommand{\eref}[1]{(\ref{#1})}
\newcommand{\tref}[1]{Table~\ref{#1}}
\newcommand{\fref}[1]{Figure~\ref{#1}}
\newcommand{\cref}[1]{Chapter~\ref{#1}}
\newcommand{\lemmaref}[1]{Lemma~\ref{#1}}
\newcommand{\thmref}[1]{Theorem~\ref{#1}}
\newtheorem{theorem}{\bf Theorem}
\newtheorem{lemma}{\bf Lemma}
\def\bbE{\mathbb{E}}
\def\bbI{\mathbb{I}}
\def\bbP{\mathbb{P}}
\def\cB{{\mathcal B}}
\def\cG{{\mathcal G}}
\def\cM{{\mathcal M}}
\def\cN{{\mathcal N}}
\def\cS{{\mathcal S}}
\def\bfmath#1{\boldsymbol{#1}}
\def\bfe{{\bfmath{e}}}
\def\bfm{{\bfmath{m}}}
\def\bfn{{\bfmath{n}}}
\def\bfv{{\bfmath{v}}}
\def\bfP{\bfmath{P}}
\def\bfQ{\bfmath{Q}}
\def\bfT{\bfmath{T}}
\def\ds{\ensuremath{\displaystyle}}
\def\EIIn{E_{2\cN}}
\def\EIInn{E_{2\cN\cN}}
\def\EIIs{E_{2\cS}}
\def\EIIb{E_{2\cB}}
\DeclareMathOperator*{\Exp}{Exp}
\definecolor{LightGray}{gray}{0.9}
\begin{document}
\renewcommand{\belowcaptionskip}{6mm}
\renewcommand\figurename{{\sc Figure}}
\renewcommand\tablename{TABLE}
\renewcommand*{\thefootnote}{\fnsymbol{footnote}}
\begin{center}
\begin{spacing}{1.5}
{\Large \bf General triallelic frequency spectrum under demographic models with variable population size}
\end{spacing}

\vspace{0.5cm}
{Paul A. Jenkins$^a$,\hspace{5mm} Jonas W. Mueller$^b$, \hspace{5mm} Yun S. Song$^{c, d,}$\footnote{To whom correspondence may be addressed: yss@cs.berkeley.edu}} 

\vspace{0.5cm}
$^a$
Department of Statistics, University of Warwick, Coventry CV4 7AL, UK\\

$^b$
Department of EECS, Massachusetts Institute of Technology, Cambridge, MA 02139, USA\\

$^c$
Department of Statistics, University of California, Berkeley,  CA
94720, USA\\

$^d$
Computer Science Division, University of California, Berkeley,  CA
94720, USA\\
\end{center}

\begin{abstract}    
It is becoming routine to obtain datasets on DNA sequence variation across several thousands of chromosomes, providing unprecedented opportunity to infer the underlying biological and demographic forces. Such data make it vital to study summary statistics which offer enough compression to be tractable, while preserving a great deal of information. One well-studied summary is the site frequency spectrum---the empirical distribution, across segregating sites, of the sample frequency of the derived allele. However, most previous theoretical work has assumed that each site has experienced at most one mutation event in its genealogical history, which becomes less tenable for very large sample sizes. In this work we obtain, in closed-form, the predicted frequency spectrum of a site that has experienced at most \emph{two} mutation events, under very general assumptions about the distribution of branch lengths in the underlying coalescent tree. Among other applications, we obtain the frequency spectrum of a triallelic site in a model of historically varying population size. We demonstrate the utility of our formulas in two settings: First, we show that triallelic sites are more sensitive to the parameters of a population that has experienced historical growth, suggesting that they will have use if they can be incorporated into demographic inference. Second, we investigate a recently proposed alternative mechanism of mutation in which the two derived alleles of a triallelic site are created \emph{simultaneously} within a single individual, and we develop a test to determine whether it is responsible for the excess of triallelic sites in the human genome.
\end{abstract}  

\section{Introduction}
Thanks to the recent advances in DNA sequencing technologies, it has become feasible to obtain data on sequence variation across tens of thousands of chromosomes \citep[e.g.][]{cov:etal:2010, kei:cla:2012, nel:etal:2012, ten:etal:2012}, and hence to study the impact of variants of very low population frequency. Classical models underlying population genetic studies have typically assumed that each site is affected by at most one mutation event in the genealogical history relating a sample, but for very large samples this assumption is less tenable. One must then account for sites experiencing repeat mutations, which skew the site frequency spectrum and can generate triallelic and even quadra-allelic sites. Triallelic sites are therefore becoming increasingly common, appearing as a few percent of segregating sites in large-scale resequencing studies, particularly as the threshold on masking sites below a given minor allele frequency is being reduced. There are now examples of studies that have found an association between a triallelic single nucleotide polymorphism (SNP) and a disease phenotype, including coronary heart disease \citep{cra:etal:2006} and inflammatory bowel disease \citep[discussed in][]{hue:etal:2007}.

Triallelic sites also have potential use in inference using frequency spectrum data. The observed frequency spectrum of \emph{diallelic} sites is well-recognized as an important summary of genomic data, maintaining a great deal of the information encapsulated by the full data while being relatively simple to interpret. It is therefore well-studied: the effects of a host of modeling assumptions on the frequency spectrum have been investigated and many theoretical predictions have been made, typically using either coalescent-based or diffusion-based models. For example, one can obtain analytic results incorporating the effects of a population of varying size \citep{gri:tav:1998, woo:rog:2002, pol:kim:2003, pol:etal:2003}, selection \citep{gri:2003}, and population subdivision with instantaneous migration events \citep{che:2012}. The Poisson random field framework of \citet{saw:har:1992} is attractive in this respect because of its amenability to the incorporation of natural selection \citep{saw:har:1992, bus:etal:2001}. This and other diffusion-based approaches can also be extended to obtain numerical solutions for more complicated underlying population demographic histories, including a single population of variable size \citep{wil:etal:2005, eva:etal:2007, boy:etal:2008} or a hierarchy of splitting subpopulations with restricted migration between them \citep{gut:etal:2009, luk:etal:2011, luk:hey:ip}. Essentially, one writes down the Kolmogorov forward equation for the underlying diffusion approximation and then obtains a numerical solution using finite differences \citep{wil:etal:2005, eva:etal:2007, gut:etal:2009} or spectral methods \citep{luk:etal:2011, luk:hey:ip}. Examples of inference using the frequency spectrum such as these are important because they can help us learn about recent human population history, estimate the strength of natural selection, and calibrate our expectations prior to a disease association study. However, none of these approaches make use of the information from triallelic sites since they rely on an infinite-sites assumption in which triallelic sites are never observed \citep[although see][]{des:plo:2008,son:ste:2012,ste:etal:2013}. There have been some extensions to incorporate recurrent mutations into the theory of the frequency spectrum \citep{sar:2006, hob:wiu:2009,jen:son:2011:TPB, bha:etal:2012}, but with the exception of \citet{sar:2006} these all assume a simple demography of a stationary, panmictic population of constant size.

In this paper, we obtain a closed-form expression for the sample frequency spectrum of a site that has experienced two mutation events, under an extension of the standard coalescent model which allows for very general assumptions about the distribution of times between coalescence events. 
This allows us to obtain predictions for the shape of the frequency spectrum allowing for both recurrent mutations and varying historical population size.

To emphasize the usefulness of our results, we consider two applications. First, we investigate the sensitivity of the triallelic frequency spectrum to the assumed demographic history. In particular, our interest is in the question: How much power to distinguish between different demographic models do we gain looking at a triallelic, rather than diallelic, site? In a manner quantified further below, we show that although triallelic sites are far less abundant than diallelic sites, they have rather greater value \emph{per site} in capturing the effects of demographic history.

This application relies on a frequency spectrum in which the two mutation events arose independently during the genealogical history of the site. Recently however, \citet{hod:eyr:2010} noted that there are approximately twice as many triallelic sites in the human genome as would be expected by chance. They explored a number of potential explanations and ultimately favored the idea of a new mutational mechanism: namely, the simultaneous generation of two new alleles due to mutation within a single individual. Although the precise mechanism is unknown, they suggest the instability of base mismatches as a plausible explanation. For example, a mutation of a G=C base pair to an unstable G=A mismatch could give rise to a further mutation to C=A. DNA replication of this mismatch means the ancestral G=C has given rise to both a derived A=T and a derived C=G base pairing (\fref{fig:hod:eyr:2010}). Another possibility is that both strands of the DNA duplex mutate simultaneously due to a chemical or radiation event. As a second application of our results, we design and implement a frequency spectrum-based test for the hypothesis that a subset of triallelic sites were generated by a \emph{simultaneous} mutation event within a single individual giving rise to the two derived alleles. 
The test allows us to account for variable historical population size explicitly, and when we do so we do not find evidence in favor of the existence of such a mechanism (although, as we discuss below, it is likely that this is further confounded by population subdivision in the samples used).

\begin{figure}[t]
	\centering
\includegraphics[width=0.7\textwidth]{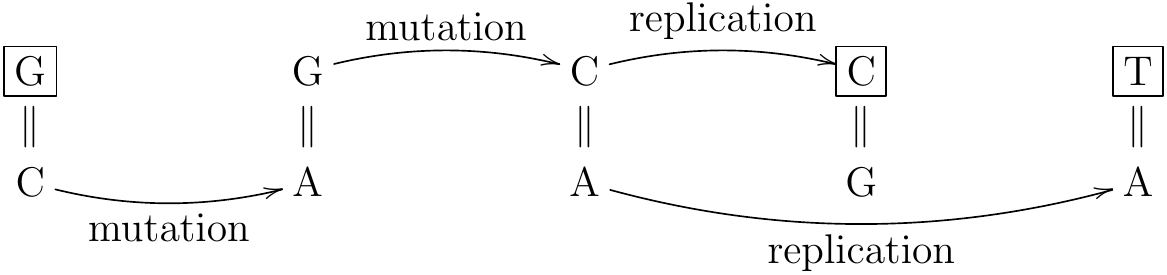}
\caption{\label{fig:hod:eyr:2010}A possible mechanism of simultaneous mutation suggested by \citet{hod:eyr:2010}. Each nucleotide of a single base pair mutates due to instability of the mismatch. Replication of the resulting base pair, using each of the parental strands, results in two new derived alleles. Here, the G allele has given rise to a C and a T (boxed).}
\end{figure}

This paper is organized as follows. In the following section we first introduce some notation and summarize some previous results. We then obtain closed-form formulas for a triallelic frequency spectrum under a general model of coalescence times distributions. In the sections thereafter we consider two applications. First, we perform an extensive simulation study to discern between the sensitivities of diallelic and triallelic frequency spectra to the underlying demographic model. 
Second, we obtain the triallelic frequency spectrum under the proposed simultaneous-mutation mechanism of \citet{hod:eyr:2010}, and develop a likelihood ratio test to compare it with the null triallelic frequency spectrum under independently-occurring mutations.  The test is applied to sequence data taken from the Environmental Genome Project \citep{NIEHS:Aug2011} and the SeattleSNPs project \citep{seattleSNPs:Aug2011} to examine whether some fraction of the triallelic sites in these datasets are in fact the product of simultaneous mutations. 

\section{Notation and previous results}
In this section we introduce our notation and summarize some existing results. Denote by $N_0$ the diploid effective population size in the present generation and by $u$ the probability of a mutation event at a given locus 
per meiosis. For simplicity we assume throughout that the ``locus'' is a single site, although we note that the theory extends easily to other loci that may be of interest. Let $\theta = 4N_0u$ be the population-scaled mutation rate, which we take to be fixed in the usual diffusion limit as $N_0\to\infty$. In this limit and on a timescale of $2N_0$ generations, we denote by $N_t$ the effective population size at time $t$ back in the past, which we take to be a nonrandom function of time such that $N_t \gg 1$ so that a coalescent limit exists for all times \citep[see][for details]{sla:hud:1991, gri:tav:1994:PTRSB}. We assume a general $K$-allele mutation model with mutation transition matrix $\bfP = (P_{ij})$, so that $P_{ij}$ is the probability forwards in time of a mutation taking allele $i$ to allele $j$, given that a type $i$ mutated. It is usual to treat $\bfP$ (and $K$) as fixed and known. We further denote by $a\in\{1,\ldots, K\}$ the ancestral allele at the site of interest, and by $\bfn = (n_1,n_2,\ldots, n_K)$ the unordered sample configuration taken from that site, with total sample size $n = \sum_{i=1}^K n_i$. A unit $K$-vector whose $k$th entry is 1 and all other entries are 0 is denoted by $\bfe_k$. Finally, let $E_s$ denote the event that there were precisely $s$ mutation events at the site in the genealogical history relating the sample.

The sample frequency spectrum can be obtained first by finding the probability of the observed sample configuration under the assumptions of an appropriate coalescent model. This may be partitioned according to the number of mutation events in the genealogical history relating the sample. However, for humans the average per-generation mutation rate for SNPs is small; recent studies show $u \approx 1.2\times 10^{-8}$ \citep{kon:etal:2012,cam:etal:2012}.  Classical population genetics results on the frequency spectrum can be obtained formally by conditioning on precisely one mutation event in the history of the site and then letting $\theta\to 0$. Denoting the ancestral and derived alleles in a diallelic model respectively by $a$ and $b$, it is well known \citep{wat:1975, fu:1995, gri:tav:1998} that for a constant population size, $N_t \equiv N_0$, and a sample configuration of the form $\bfn = (n_a, n_b)$,
\begin{equation}
	\label{eq:SFS}
\phi(i) := \lim_{\theta\to 0}\bbP[(n_a,n_b) = (n-i,i)\mid E_1] = \frac{i^{-1}}{\sum_{j=1}^{n-1}j^{-1}},
\end{equation}
since $\bbP[(n_a,n_b) = (n-i,i),E_1] = \theta i^{-1} + O(\theta^2)$. We refer to the quantity $\phi(i)$ as the \emph{sample} frequency spectrum [as distinguished from the density of the expected number of mutations at each frequency $x\in(0,1)$ in a population of genomes comprising many polymorphic sites, which is also referred to as the (site) frequency spectrum]. Throughout this work we obtain the sample frequency spectrum in a finite-alleles model and in the limit as $\theta \to 0$, after conditioning on the required number of mutation events (for triallelic sites, at least two mutation events are of course necessary). In fact, the result \eref{eq:SFS} is usually obtained by positing a model of \emph{infinitely-many-sites} of mutation, and then finding the distribution of the number of copies of the mutant allele at any random position at which a mutation occurred. Because we condition on looking at a mutant site, this distribution is equivalent to that of a finite-alleles model at a \emph{fixed} site and conditioned on one mutation event, with the implicit assumption that $P_{aa} = 0$ so that the overall rate of mutation in the two models is the same.

There are two extensions to the above result which are relevant to the present work. The first is to general coalescent trees in which the collection of inter-coalescence times, $\bfT = (T_n,T_{n-1},\ldots, T_2)$ is not necessarily given by the standard sequence of independent, exponentially-distributed random variables. In a standard coalescent model we have that the time $T_k$ during which there exist $k$ distinct ancestors to the sample satisfies $T_k \sim \Exp\binom{k}{2}$ on the coalescent timescale. However, certain extensions to this model yield a more complicated distribution for $\bfT$ but leave the topological structure of the tree otherwise unchanged. In this setting, \citet{gri:tav:1998} have obtained the following result:
	Under a coalescent model with general inter-coalescence times $\bfT$ and conditional on precisely one mutation event at a given site, the sample frequency spectrum is given by
	\begin{equation}
	\phi(i) = \frac{\sum_{k=2}^n \alpha_k^{(n-i,i)}\bbE(T_k)}{\sum_{k=2}^n \beta_k\bbE(T_k)},
	\label{eq:gri:tav:1998}
	\end{equation}
where $\alpha_k^{(n-i,i)} = \frac{(n-i-1)!(i-1)!}{(n-1)!}k(k-1)\binom{n-k}{i-1}$ and $\beta_k = k$.
One application of this result is to a coalescent model with a nonconstant population size $N_t$. The distribution for $\bfT$ does not have a simple form, but an expression for $\bbE(T_k)$ is given by \citet{gri:tav:1998}, and an expression for the marginal density of $T_k$ is given by \citet{woo:rog:2002}, \citet{pol:etal:2003} and \citet{pol:kim:2003}. In the Appendix we provide a new proof of \eref{eq:gri:tav:1998}, in order to illustrate our general strategy. For now we merely remark that the topological structure of any polymorphic site having experienced precisely one mutation event in its genealogical history must be of the form shown in \fref{fig:tree}. Coalescent trees of this form are studied in detail by \citet{wiu:don:1999}.

\begin{figure}[t]
	\centering
\includegraphics[width=0.35\textwidth]{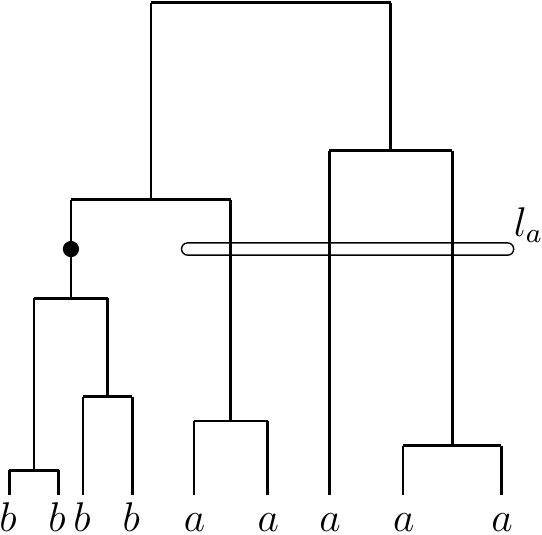}
\caption{\label{fig:tree}A coalescent tree with one mutation. The allele of each leaf is annotated. Also annotated is the variable $l_a$ determining the number of lineages ancestral to allele $a$ at the time of the sole mutation event; in this example, $l_a = 3$.
}
\end{figure}

A second extension of \eref{eq:SFS} is to allow for two mutation events at a polymorphic site. In this case we must consider the exact form of the mutation transition matrix $\bfP$. In particular, it may allow for the second mutation to revert a derived allele to its ancestral state (a \emph{back} mutation) or for the second mutation to create a second independent copy of the extant derived allele (a \emph{parallel} mutation). Such mutations do not give rise to triallelic sites, whereas in practice we will typically identify sites having experienced two mutations only when three alleles are actually observed. Thus, in extending the definition of the sample frequency spectrum to triallelic sites, we condition on \emph{observing} three alleles, an event we denote $O_3$, rather than $E_2$. \citet{jen:son:2011:TPB} have obtained the following result:
		Under a standard coalescent model with $N_t \equiv N_0$ the triallelic sample frequency spectrum is given by
		\begin{align}
		\phi(n_a,n_b,n_c) := {}& \lim_{\theta\to 0}\bbP(\bfn = n_a\bfe_a + n_b\bfe_b + n_c\bfe_c \mid O_3),\notag\\
		= {}& \frac{1}{C}\bigg[P_{ab}P_{bc}d(n_a,n_b,n_c) + P_{ac}P_{cb}d(n_a,n_c,n_b)\notag\\
		& \phantom{\frac{1}{C}\bigg[]}{}+ P_{ab}P_{ac}\left(\frac{1}{n_bn_c} - d(n_a,n_b,n_c) - d(n_a,n_c,n_b)\right)\bigg], \label{eq:jen:son:2011}
		\end{align}
where 
\begin{multline*}
C = \left[\sum_{x\neq a}\sum_{y\neq a,x}P_{ax}P_{xy}\right]\left(H_n + \frac{1}{n} - 2\right)\\ 
	{}+ \left[\sum_{x\neq a}\sum_{y\neq a,x}P_{ax}P_{ay}\right]\left(\frac{(H_{n-1})^2}{2} - \frac{H_{n-1}^{(2)}}{2} - H_n - \frac{1}{n} + 2\right),
\end{multline*}
and
\begin{align*}
	d(n_a,n_b,n_c) &= \frac{1}{(n_a+n_b)(n_a + n_b - 1)}\left[1 + \frac{n}{n_c} - \frac{2n(H_n - H_{n_c-1})}{n_a + n_b + 1}\right],\\
	H_m &= \sum_{j=1}^m \frac{1}{j},\\
	H_m^{(2)} &= \sum_{j=1}^m \frac{1}{j^2}.
\end{align*}
In the above expression $a$, $b$, and $c$ are distinct alleles with $n_a+n_b+n_c = n$; $a$ is the ancestral allele and $b$ and $c$ are derived alleles. We set $H_0 := 0$ by convention.
If the diagonal of $\bfP$ is zero, then the sums involving $\bfP$ in $C$ respectively simplify to $[1-(\bfP^2)_{aa}]$ and $[1-(\bfP\bfP^T)_{aa}]$. [That this simplification requires the diagonal of $\bfP$ to be 0 was inadvertently omitted from \citet[Corollary 6.1]{jen:son:2011:TPB}.]

\section{General triallelic frequency spectrum}

\begin{figure}[t]
\centering
\includegraphics[width=0.65\textwidth]{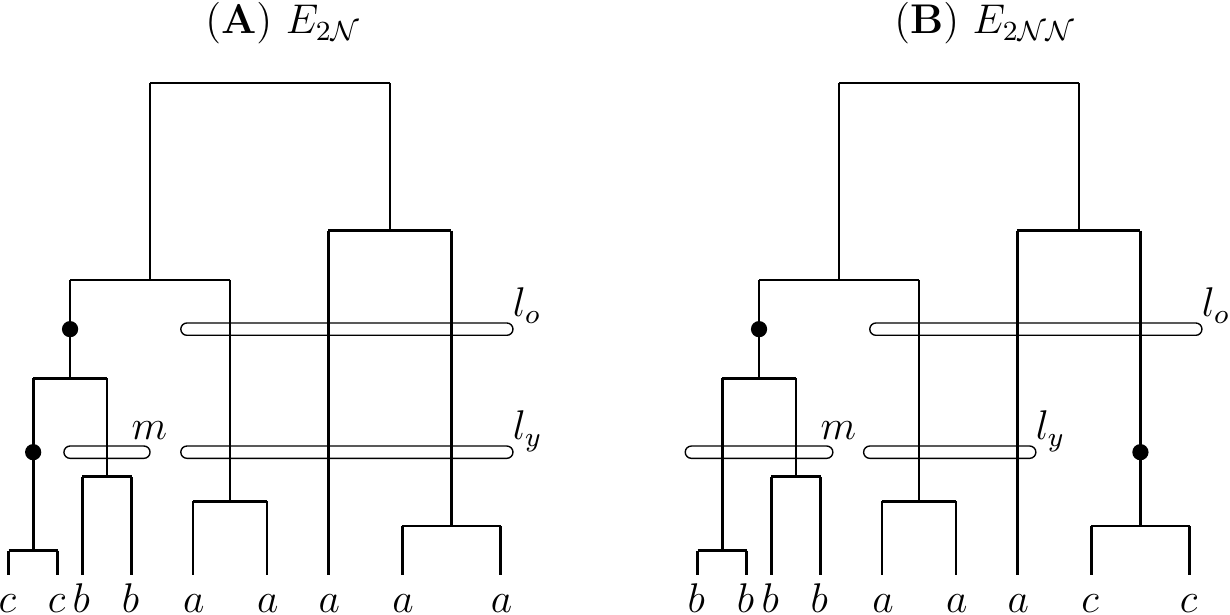}
\caption{\label{fig:trees}Coalescent trees with two mutations. ({\bf A}) Two nested mutations. ({\bf B}) Two nonnested mutations. The allele of each leaf is annotated. Also annotated are variables determining the number of each type at the times of the mutation events; for example, in ({\bf A}) we have $m = 1$, $l_y = 3$, and $l_o = 3$.}
\end{figure}

In this section we obtain a closed-form expression for the sample frequency spectrum of a triallelic site under a general coalescent model with variable population size, $N_t$. This generalizes \eref{eq:gri:tav:1998} to the case of two mutation events at a single site, and generalizes \eref{eq:jen:son:2011} to the case of a variable population size. Our arguments and notation are similar to \citet{jen:son:2011:TPB}, and a brief proof is deferred to an Appendix. In that paper, a key observation is that the event $E_2$ can be partitioned as follows:
\setlength{\leftmargini}{18mm}
\begin{enumerate}
	\item[($\EIIn$)] The two mutation events are genealogically nested.
	\item[($\EIInn$)] The two mutation events are genealogically nonnested and at least one of them does not reside on the basal (adjacent to the root) branches of the tree.
	\item[($\EIIb$)] The two mutation events reside on the two different basal (adjacent to the root) branches of the tree.
	\item[($\EIIs$)] The two mutation events reside on the same branch of the tree.
\end{enumerate}
Only the first two cases can lead to a triallelic site, and so we do not consider the last two any further.  It is straightforward to obtain analogous generalizations for the last two cases, though we omit them. The events $\EIIn$ and $\EIInn$ are illustrated in \fref{fig:trees}. In these examples, the older of the two mutation events gives rise to the allele $b$ and the younger gives rise to the allele $c$, and the subsets of $\EIIn$ and $\EIInn$ satisfying these constraints are denoted by $\EIIn^{(b,c)}$ and $\EIInn^{(b,c)}$ respectively. To find the sample frequency spectrum we first consider the joint probability of observing our triallelic sample with each of these events.

\begin{lemma}
	\label{lem:probs}
Let $n_a\bfe_a + n_b\bfe_b + n_c\bfe_c$ denote a triallelic sample as in \eref{eq:jen:son:2011}. Under a coalescent model with time-dependent population size $N_t$, the joint probability of such a sample together with the way the two mutations are placed on the coalescent tree satisfies
\begin{align}
	\bbP(\bfn = n_a\bfe_a + n_b\bfe_b + n_c\bfe_c,\EIIn^{(b,c)}) &= \frac{\theta^2}{4}P_{ab}P_{bc}\sum_{k=3}^{n_a+n_b+1}\sum_{j=2}^{k-1}C_{j,k}^{(n_a,n_b)} \bbE[T_jT_k] + O(\theta^3),\label{eq:ntwonested}\\
	\bbP(\bfn = n_a\bfe_a + n_b\bfe_b + n_c\bfe_c,\EIInn^{(b,c)}) &= \frac{\theta^2}{4}P_{ab}P_{ac}\sum_{k=3}^{n_a+n_b+1}\sum_{j=2}^k F_{j,k}^{(n_a,n_b)}\bbE[T_jT_k] + O(\theta^3) \label{eq:ntwononnested},
\end{align}
as $\theta\to 0$. Furthermore, 
\begin{align}
	\bbP(\EIIn^{(b,c)}) &= \frac{\theta^2}{4}P_{ab}P_{bc}\sum_{k=3}^n \sum_{j=2}^{k-1}D_{j,k} \bbE[T_jT_k] + O(\theta^3),\label{eq:twonested}\\
	\bbP(\EIInn^{(b,c)}) &= \frac{\theta^2}{4}P_{ab}P_{ac}\sum_{k=3}^n \sum_{j=2}^k G_{j,k}\bbE[T_jT_k] + O(\theta^3). \label{eq:twononnested}
\end{align}                                     
The coefficients in the above expressions are:
\begin{align*}
C_{j,k}^{(n_a,n_b)} = {}& \sum_{l=j-1}^{k-2}\binom{n_a-1}{l-1}\binom{n_b-1}{k-l-2}\binom{k-j}{k-1-l}\binom{n-1}{k-1}^{-1}\binom{k-1}{k-l}^{-1}j(j-1),\\
D_{j,k} = {}& j\left[k\binom{k-2}{j-1} - (j-1)\binom{k-1}{j}\right]\binom{k-1}{j-1}^{-1},\\
F_{j,k}^{(n_a,n_b)} = {}& \sum_{l=(j-2)\vee 1}^{k-2} \binom{n_a-1}{l-1}\binom{n_b-1}{k-l-2}\binom{k-j}{k-2-l}\binom{n-1}{k-1}^{-1}\binom{k-1}{l+1}^{-1}\frac{j(j-1)}{1+\delta_{j,k}},\\
G_{j,k} = {}& \left(k(j-1) - \frac{2\delta_{j,2}}{k-1}\right)\frac{1}{1+\delta_{j,k}},
\end{align*}
where $\delta_{j,k}$ denotes the Kronecker delta.
\end{lemma}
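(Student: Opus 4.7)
The plan is to follow closely the strategy used in the Appendix to rederive \eref{eq:gri:tav:1998}, extending the constant-population-size triallelic calculation of \citet{jen:son:2011:TPB} to a general distribution of coalescence times. I would first condition on the vector of inter-coalescence times $\bfT = (T_n, \ldots, T_2)$. Under the variable-population-size coalescent the labeled tree topology is independent of $\bfT$ and uniformly distributed over labeled histories, and given $\bfT$, mutations along each branch form independent Poisson processes of rate $\theta/2$ per coalescent time unit. Consequently, the joint probability that two specified branches at levels $j$ and $k$ each carry exactly one mutation and no other branch carries any is $(\theta^2/4)T_jT_k + O(\theta^3)$ for $j \neq k$ and $(\theta^2/4)T_k^2 + O(\theta^3)$ for $j = k$; taking expectation over $\bfT$ then produces the $\bbE[T_jT_k]$ factors appearing in the lemma.

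For the nested event $\EIIn^{(b,c)}$, let $j$ denote the level of the older ($ab$-type) mutation and $k > j$ the level of the younger ($bc$-type) mutation. The older-mutation lineage at level $j$ must have $n_b + n_c$ leaf descendants, and the younger-mutation lineage at level $k$ must be one of its descendants with exactly $n_c$ leaves. The coefficient $C_{j,k}^{(n_a,n_b)}$ is then obtained by enumerating tree topologies consistent with the sample, parameterized by an intermediate index $l$ counting the number of lineages at level $k$ that are not descended from the older-mutation lineage. Each configuration contributes a product of: the compositions $\binom{n_a-1}{l-1}$ and $\binom{n_b-1}{k-l-2}$ distributing $n_a$ and $n_b$ among their respective ancestors at level $k$; the normalizer $\binom{n-1}{k-1}^{-1}$ for uniform compositions of $n$ among $k$ ordered lineages at level $k$; the factor $\binom{k-j}{k-1-l}\binom{k-1}{k-l}^{-1}$, which enumerates how the $k-j$ coalescences between levels $j$ and $k$ distribute between the older-mutation subtree and its complement; and $j(j-1)$ from the choice of the distinguished older-mutation lineage at level $j$ together with the selection of the $c$-mutation lineage. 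Multiplying by $P_{ab}P_{bc}$ and $(\theta^2/4)\bbE[T_jT_k]$ and summing over $j$ and $k$ then produces \eref{eq:ntwonested}.

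The nonnested event $\EIInn^{(b,c)}$ is handled by the same procedure, but with the two mutation lineages now topologically incomparable, yielding the modified coefficient $F_{j,k}^{(n_a,n_b)}$. The case $j = k$, in which both mutations occur during the same inter-coalescence interval, is permitted here and gives rise to the symmetry factor $(1 + \delta_{j,k})^{-1}$ correcting for the indistinguishability of orderings when both mutations occur at the same level, while the $\delta_{j,2}$ term in $G_{j,k}$ subtracts off the pure-basal-branch configuration, which falls under $\EIIb$ rather than $\EIInn$. Finally, I would obtain the marginal probabilities \eref{eq:twonested} and \eref{eq:twononnested} by summing $C_{j,k}^{(n_a,n_b)}$ and $F_{j,k}^{(n_a,n_b)}$ over all $(n_a, n_b, n_c)$ with $n_a, n_b, n_c \geq 1$ and $n_a + n_b + n_c = n$; standard binomial identities then collapse these sums to the closed-form coefficients $D_{j,k}$ and $G_{j,k}$.

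The hard part will be the combinatorial bookkeeping for tree topologies at two distinct coalescence levels --- especially verifying the factor $\binom{k-j}{k-1-l}\binom{k-1}{k-l}^{-1}$, which encodes the distribution of intervening coalescences between the older-mutation subtree and the rest. The nonnested case adds the further subtleties of cleanly handling the same-level case $j = k$ and excising the basal-branch configuration that properly belongs to $\EIIb$.
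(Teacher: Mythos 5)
Your proposal follows essentially the same route as the paper's proof: both exploit the independence of the (uniform) coalescent topology from $\bfT$, extract the $\frac{\theta^2}{4}\bbE[T_jT_k]$ factor from the Poisson placement of exactly two mutations on the specified branches, and reduce the coefficients to a combinatorial enumeration of compatible topologies indexed by the levels of the two mutations and an intermediate count of type-$a$ ancestors, including the same age-ordering correction $(1+\delta_{j,k})^{-1}$ and the exclusion of the basal--basal configuration via the $\delta_{j,2}$ term. The only cosmetic difference is that the paper organizes the topology count as a sum over interspersed sequences of coalescence events parameterized by $(m,l_y,l_o)$ (following Jenkins and Song, 2011), whereas you work with uniform compositions at level $k$ and the distribution of the $k-j$ intervening coalescences; this is a reparameterization of the same bookkeeping, which you rightly flag as the part still requiring explicit verification.
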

\begin{proof}
	See the Appendix.
\end{proof}
From the above lemma, we can obtain our main result in a straightforward manner.
\begin{theorem}
	\label{thm:main}
	Let $n_a\bfe_a + n_b\bfe_b + n_c\bfe_c$ denote a triallelic sample as above: $a$, $b$, and $c$ are distinct alleles with $n_a+n_b+n_c = n$; $a$ is the ancestral allele and $b$ and $c$ are derived alleles. Under a coalescent model with time-dependent population size $N_t$ and in which mutation events occur independently in the tree, the sample frequency spectrum is
	\begin{align}
		\phi_0(n_a,n_b,n_c) &:= \lim_{\theta\to 0}\bbP(\bfn = n_a\bfe_a + n_b\bfe_b + n_c\bfe_c\mid O_3),\notag\\
		&\phantom{:}= \frac{\ds\sum_{k=3}^{n}\sum_{j=2}^k \gamma_{j,k}^{(n_a,n_b,n_c)}\bbE[T_jT_k]}{\ds\sum_{k=3}^n\sum_{j=2}^k \kappa_{j,k}\bbE[T_jT_k]}, \label{eq:main}
	\end{align}
	where
	\begin{align*}
		\gamma_{j,k}^{(n_a,n_b,n_c)} = {} & (P_{ab}P_{bc}C_{j,k}^{(n_a,n_b)} + P_{ab}P_{ac}F_{j,k}^{(n_a,n_b)})\bbI\{k \leq n_a+n_b+1\}\\
		& {}+ (P_{ac}P_{cb}C_{j,k}^{(n_a,n_c)} + P_{ab}P_{ac}F_{j,k}^{(n_a,n_c)})\bbI\{k \leq n_a+n_c+1\},\\
		\kappa_{j,k} ={} & \left[\sum_{x\neq a}\sum_{y\neq a,x}P_{ax}P_{xy}\right]D_{j,k} + \left[\sum_{x\neq a}\sum_{y\neq a,x}P_{ax}P_{ay}\right]G_{j,k},
	\end{align*}
and $\bbI\{\cdot\}$ denotes the indicator function.
\end{theorem}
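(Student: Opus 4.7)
The plan is to derive Theorem~1 by combining Lemma~\ref{lem:probs} with a direct application of the definition of conditional probability. Write
\[
\phi_0(n_a,n_b,n_c) \;=\; \lim_{\theta\to 0}\frac{\bbP(\bfn = n_a\bfe_a + n_b\bfe_b + n_c\bfe_c,\,O_3)}{\bbP(O_3)},
\]
and observe that since the observation of three distinct alleles requires at least two mutation events, the events $\EIIs$ and $\EIIb$ do not contribute (the former leaves one allele and the latter erases the ancestral type, leaving only two alleles), while three events can only arise with probability $O(\theta^3)$. Thus on both numerator and denominator, we may replace $O_3$ by the disjoint union of the triallelic configurations in $\EIIn$ and $\EIInn$, up to $O(\theta^3)$ error.

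Next I would partition each of the numerator and denominator events by specifying which derived allele is older. For the numerator, the sample $(n_a,n_b,n_c)$ with ordering $(b,c)$ contributes through $\EIIn^{(b,c)}$ and $\EIInn^{(b,c)}$ (governed by Lemma~\ref{lem:probs} with parameter $n_b$), and the ordering $(c,b)$ contributes through $\EIIn^{(c,b)}$ and $\EIInn^{(c,b)}$ (governed by the same lemma after swapping $b \leftrightarrow c$, hence with parameter $n_c$). This yields four terms with mutation weights $P_{ab}P_{bc}$, $P_{ac}P_{cb}$, $P_{ab}P_{ac}$, and $P_{ac}P_{ab}$, respectively; combining the two nonnested contributions (which share the weight $P_{ab}P_{ac}$) and packaging the restricted summation ranges $k \leq n_a+n_b+1$ and $k \leq n_a+n_c+1$ using indicator functions gives precisely the coefficient $\gamma_{j,k}^{(n_a,n_b,n_c)}$ of the theorem. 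For the denominator, I would sum $\bbP(\EIIn^{(x,y)}) + \bbP(\EIInn^{(x,y)})$ from Lemma~\ref{lem:probs} over all ordered pairs $(x,y)$ of distinct derived alleles (i.e., $x \neq a$, $y \neq a,x$); the coefficients $D_{j,k}$ and $G_{j,k}$ factor out of this sum, leaving only the mutation-probability double sums, which assemble into $\kappa_{j,k}$.

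After these substitutions, numerator and denominator each take the form $\frac{\theta^2}{4}\sum_{j,k}(\cdot)\bbE[T_jT_k] + O(\theta^3)$, so the common factor $\theta^2/4$ cancels and the remainder terms vanish in the limit $\theta \to 0$, yielding \eref{eq:main}. The only step that requires care is the bookkeeping: verifying that \emph{every} triallelic realization is accounted for exactly once (this is why we must consider both orderings of the derived alleles, and why $\EIIs$/$\EIIb$ are genuinely excluded), and that the truncation of the inner sum to $k \leq n_a+n_b+1$ (respectively $n_a+n_c+1$) coming from Lemma~\ref{lem:probs} is correctly encoded by the indicators. No further analytic input is needed beyond the lemma, since the $\bbE[T_jT_k]$ quantities are inherited directly from it and no assumption about the law of $\bfT$ is invoked.
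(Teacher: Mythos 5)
Your proposal is correct and follows essentially the same route as the paper: both reduce $\bbP(\bfn\mid O_3)$ to the ratio of the four terms $\bbP(\bfn,\EIIn^{(b,c)})+\bbP(\bfn,\EIIn^{(c,b)})+\bbP(\bfn,\EIInn^{(b,c)})+\bbP(\bfn,\EIInn^{(c,b)})$ over the sum of $\bbP(\EIIn^{(x,y)})+\bbP(\EIInn^{(x,y)})$ across ordered pairs of distinct derived alleles, then substitute Lemma~\ref{lem:probs}, cancel $\theta^2/4$, and let $\theta\to 0$. The paper simply states this as conditioning on $O_3\cap E_2$ up to $O(\theta)$ and cites the bookkeeping; your extra remarks on excluding $\EIIs$, $\EIIb$, and higher-order mutation events make explicit what the paper leaves implicit.
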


\begin{proof}
	As in \citet[Theorem 6.2]{jen:son:2011:TPB}, this follows from
	\begin{align*}
		\bbP(\bfn\mid O_3) &= \frac{\bbP(\bfn,O_3,E_2)}{\bbP(O_3,E_2)} + O(\theta),\\
		&= \frac{\bbP(\bfn,\EIIn^{(b,c)}) + \bbP(\bfn,\EIIn^{(c,b)}) + \bbP(\bfn,\EIInn^{(b,c)}) + \bbP(\bfn,\EIInn^{(c,b)})}{\ds\left[\sum_{x\neq a}\sum_{y\neq a,x}\bbP(\EIIn^{(x,y)})\right] + \left[\sum_{x\neq a}\sum_{y\neq a,x}\bbP(\EIInn^{(x,y)})\right]} + O(\theta).
	\end{align*}
	Now substitute for each term on the right-hand side using \lemmaref{lem:probs} and let $\theta\to 0$.
\end{proof}
Thus, while the frequency spectrum for a site experiencing one mutation event depends only on the first moments of the elements of $\bfT$ (see \eref{eq:gri:tav:1998}), the frequency spectrum for a site experiencing two mutation events depends only on the second moments of the elements of $\bfT$ (\thmref{thm:main}). These moments are considered in further detail by \citet{pol:etal:2003} and \citet{ziv:wie:2008}. Under a suitable choice of historical population size function, $N_t$, the frequency spectrum given by equation \eref{eq:main} will serve as our null model for triallelic sites. As a check on equation \eref{eq:main}, we can fix the population size, $N_t \equiv N_0$, so that
\[
\bbE[T_jT_k] = (1+\delta_{j,k})\binom{j}{2}^{-1}\binom{k}{2}^{-1}.
\]
Inserting this expression into \eref{eq:main} leads to \eref{eq:jen:son:2011}, after extensive simplification.

While this article was under review we learned of related work by \citet{sar:2006}, who also obtains an expression for the frequency spectrum of a site experiencing two mutation events under an arbitrary distribution on $\bfT$ \citep[Lemma 34]{sar:2006}. Our work strengthens his result, which relies on higher order and exponential moments of the elements of $\bfT$. Our work also allows for a more general model of mutation and disentangles the relative contributions of nested and nonnested mutations.

\section{Application I: Sensitivity to demography}

In this section, we compare the frequency spectrum of a diallelic site with that of a triallelic site. 
Given a sample taken from a population whose recent history is described by a demographic model $\cM_1$, 
we can measure the information that is lost if one erroneously applies the frequency spectrum according to another model $\cM_0$. 
To quantify this difference in information, we employ the Kullback-Leibler (KL) divergence, a measure defined for this task \citep{kul:lei:1951, bur:and:2002}. We define the KL divergence from $\cM_1$ to $\cM_0$ by
\[
D(\cM_1 || \cM_0) = \bbE_{\cM_1}\left(\log \frac{\phi_{\cM_1}(\bfn)}{\phi_{\cM_0}(\bfn)}\right),
\]
where $\phi_{\cM_i}$ is the appropriate sampling distribution under model $\cM_i$ and $\bbE_{\cM}$ denotes expectation with respect to random samples $\bfn$ drawn under model $\cM$. Thus, KL divergence is the expected likelihood ratio when testing an alternative $\cM_1$ against a null $\cM_0$ and the alternative is true. Although KL divergence properly refers to distributions under these models rather than the models themselves, when we refer to the divergence between two models it should be clear from the context that we are referring to either their diallelic or triallelic sample frequency spectrum.

We will focus on the divergence from one model of population growth to another. The KL divergence (amount of information loss) 
can thus be compared for samples from a diallelic site versus samples from a triallelic site. 
A larger value of KL divergence for triallelic sites would suggest that such sites are potentially very informative for demographic inference. Throughout this section, a symmetric mutation matrix is used in the frequency spectra calculations (i.e., it is assumed that all transitions between alleles are equally likely). To illustrate how divergences vary at different scales, we focus our analysis on frequency spectra for samples of $10$ and $100$ individuals [a typical magnitude of sample sizes in demographic inference studies \citep{wil:etal:2005, gut:etal:2009}].  

\subsection{The effect of sample size in simulations} In order to compute frequency spectra under general models of historical population size we need first- and second-order moments of $\bfT$ (c.f., Equation \eref{eq:gri:tav:1998} and \thmref{thm:main}). We pre-compute these by simulating coalescent trees using \texttt{ms} \citep{hud:2002}. In order to investigate the effect of sample size in simulations and of the differing dimensions of the two frequency spectra, we first consider the case of a constant population size (i.e., $N_t \equiv N_0$), for which the expected frequency spectra are known in closed form (see Equations~\eref{eq:SFS} and~\eref{eq:jen:son:2011}). Specifically, we compute the KL divergence from the expected frequency spectrum computed exactly to the expected frequency spectrum obtained by simulation of $N_\text{trees}$ to approximate the first- and second-order moments of $\bfT$.  The results are shown in \fref{fig:convergence}.  Although it might be considered unfair to compare KL divergences in diallelic frequency spectra (one-dimensional distributions) with those between triallelic spectra (two-dimensional distributions), \fref{fig:convergence} clearly shows that these divergences exhibit extremely similar behavior as we increase $N_\text{trees}$.  The degree to which the true  spectrum is approximated by its Monte Carlo counterpart is almost exactly the same in the di- and tri-allelic cases, regardless of the choice of $N_\text{trees}$. We also note that for both di- and tri-allelic spectra, the KL divergences have similar magnitude between sample sizes $10$ and $100$ for all choices of $N_\text{trees}$. Thus, KL divergence appears to be a good measure of the difference between two frequency spectra which is relatively invariant to the differences in dimensionality and sample size in our study. Because \fref{fig:convergence} illustrates that using $N_\text{trees} = 10^6$ results in negligibly small divergences (on the order of $10^{-8}$) from the true (closed-form) frequency spectra, we fix $N_\text{trees} = 10^6$ in the remainder of this section.

\begin{figure}[t]
\centering
\includegraphics[width = 0.4\textwidth]{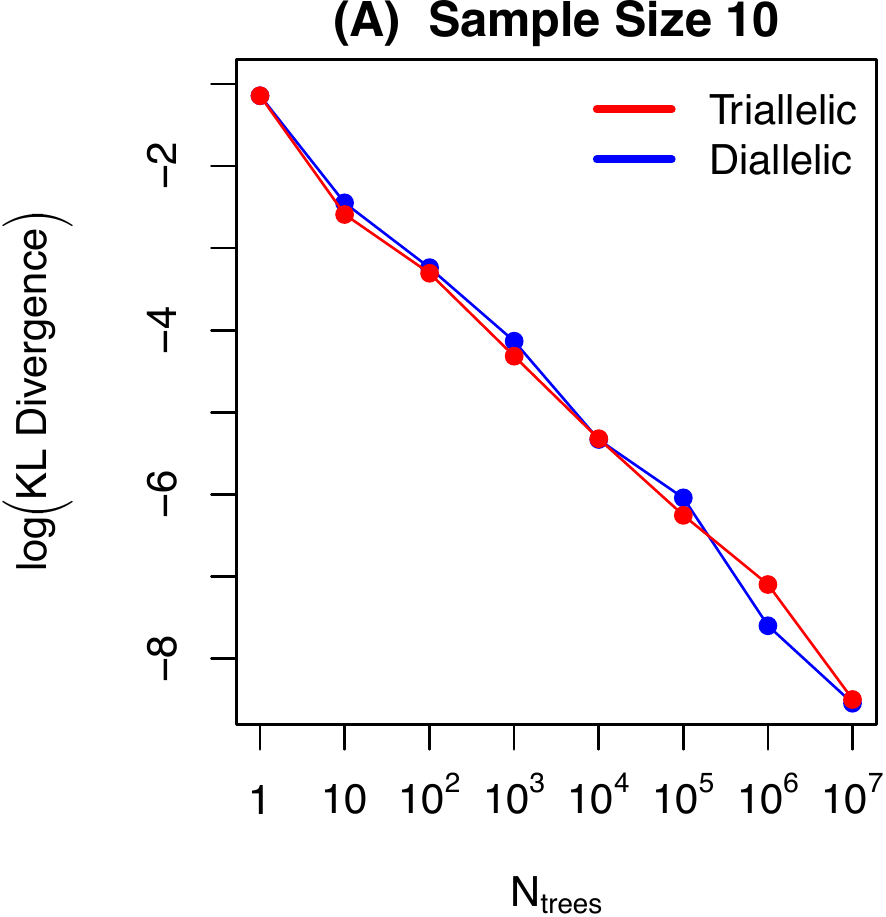} \hspace{1cm}
\includegraphics[width = 0.4\textwidth]{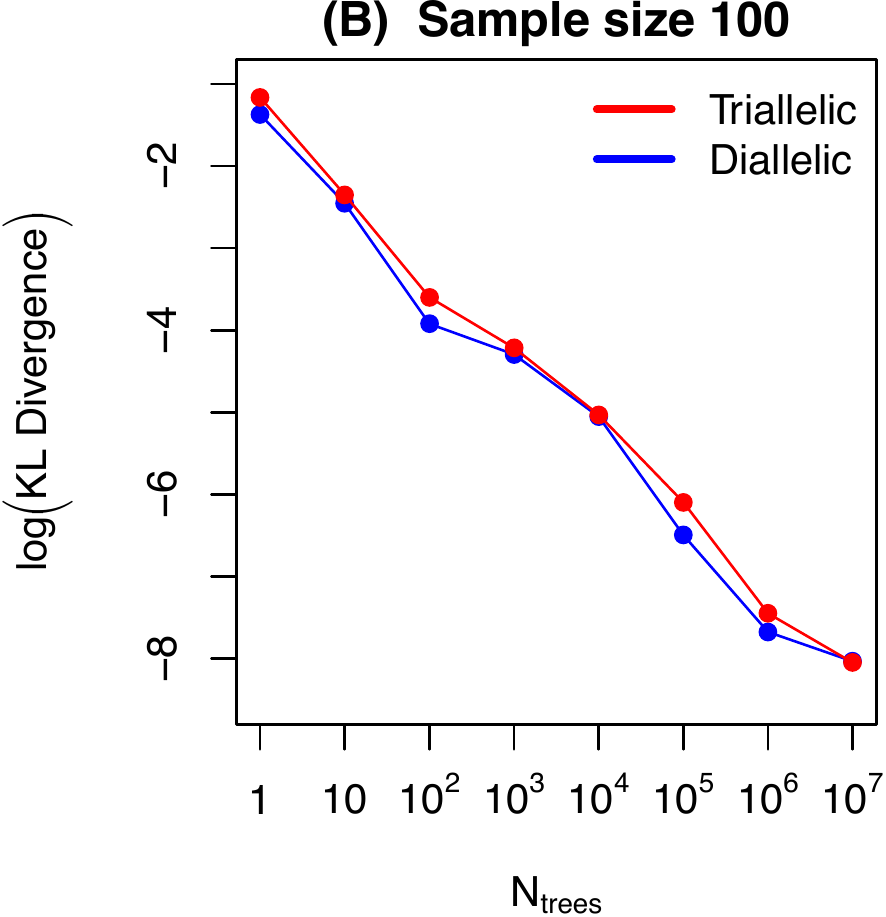} 
\caption{\label{fig:convergence}
Statistical error in the frequency spectrum computation due to approximating first- and second-order moments of inter-coalescence times using simulations. 
For a constant population size, the plots show mean (base $10$ log) KL divergence (over $10$ repetitions) of the frequency spectrum approximated using simulations of inter-coalescence times, from the true frequency spectrum computed by using the exact expected inter-coalescence times. 
The diallelic frequency spectrum is shown in blue and the triallelic frequency spectrum is shown in red, for samples of size (A) $10$ and (B) $100$.
}
\end{figure}

\subsection{Exponential growth} Next, we examine how sampling from a population with historical exponential growth affects the resulting frequency spectrum. As specified in \fref{fig:expgrowthmodels}, we investigate seven models of exponential growth, $\cG_i$, $i=1,\ldots 7$ (with $\cG_0$ representing a population of fixed size). To compute their respective moments of inter-coalescence times, we first simulate $10^6$ trees from populations according to each model. In \fref{fig:expgrowth}A and \ref{fig:expgrowth}B we compute the KL divergence $D(\cG_i || \cG_0)$ of the sample frequency spectrum under $\cG_0$ from the sample frequency spectrum under $\cG_i$, for $i = 1,\ldots 7$. To investigate the potential benefit of triallelic spectra in fine-tuning between two population growth models with different degrees of exponential growth, we also compute KL divergences $D(\cG_i || \cG_{i-1})$ of sample frequency spectra under growth model $\cG_{i-1}$ from growth model $\cG_i$, for $i = 1,\dots,7$ (\fref{fig:expgrowth}C, \ref{fig:expgrowth}D).

\begin{figure}[t] 
\centering
\includegraphics[width = 0.5 \textwidth]{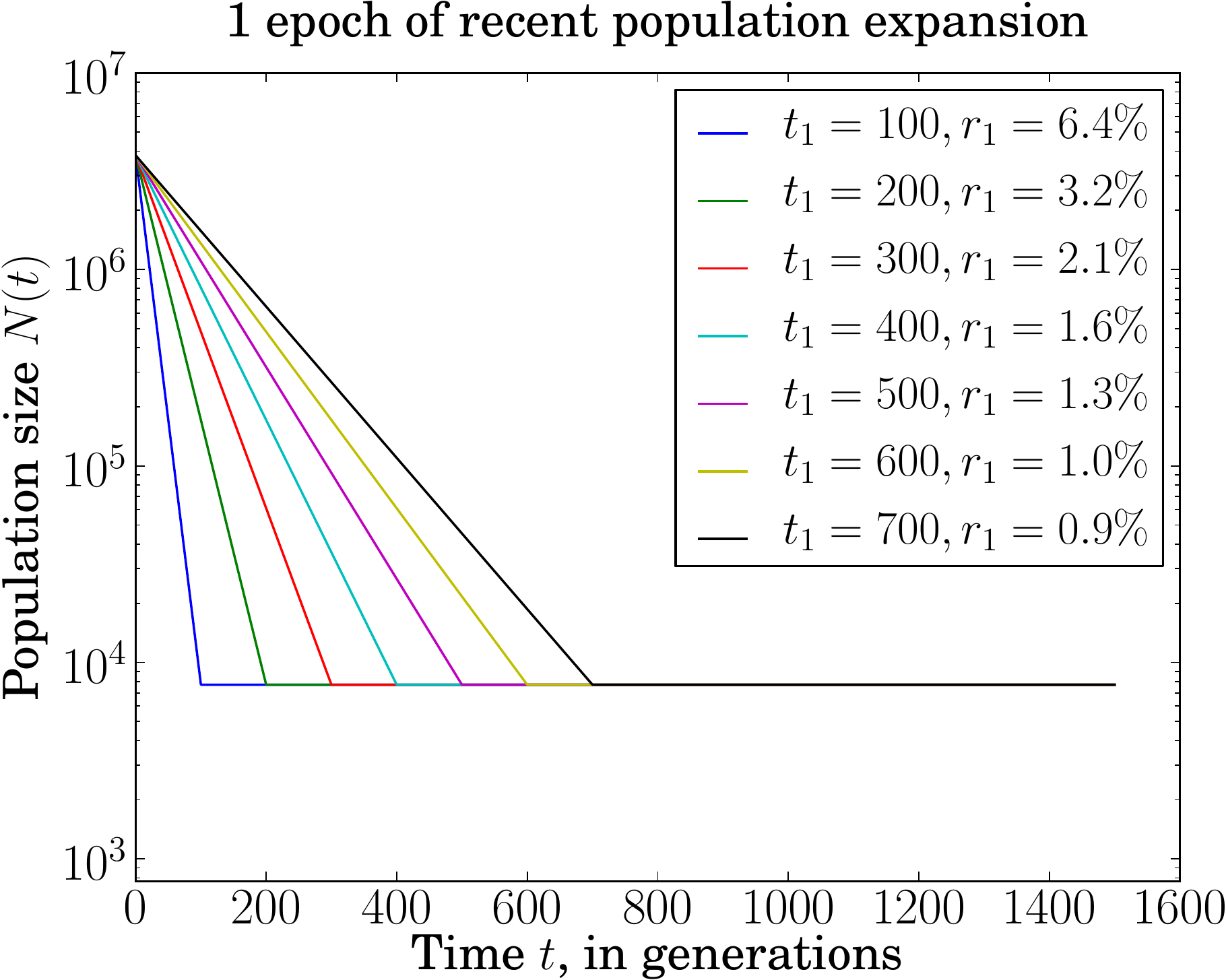}
\caption{\label{fig:expgrowthmodels}Seven models of exponential population growth examined in our analysis. Models 1--7 correspond to growth curves with the onset of growth occurring from $t_1 = 100$ generations ago to $t_1 = 700$ generations ago, respectively. The growth rate per generation, $r_1$,  is chosen so that $N_0 = 10^{6.5}$ in each model. 
}
\end{figure}

\fref{fig:expgrowth} demonstrates clear superiority of using triallelic spectra to distinguish between demographic models with varying degrees of exponential population growth. The mean KL divergence from exponential growth model $i$ to $i-1$ is increased by 87\% when triallelic spectra are used in place of diallelic spectra for samples of size $10$ (and the divergence is increased by 99\%  for sample size $100$). This indicates that triallelic sites contain information which may significantly increase our ability to discern between competing exponential growth models with similar parameters.

\begin{figure}[t]
\centering
\includegraphics[width = 0.48 \textwidth]{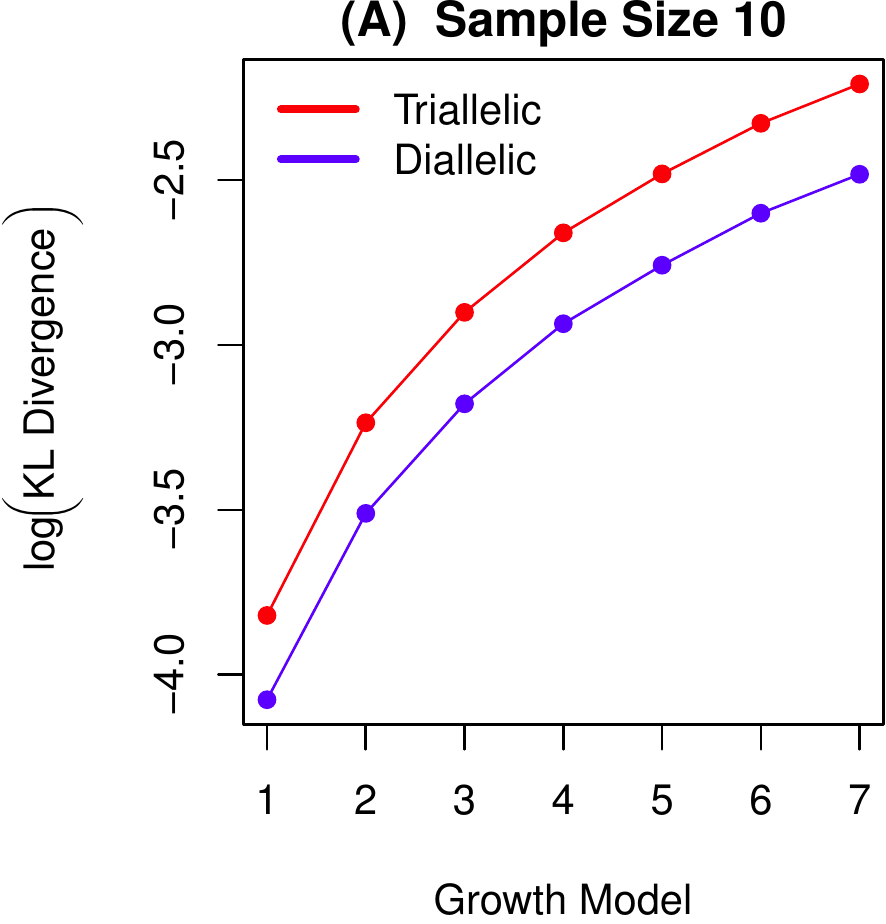}\hfill
\includegraphics[width = 0.48 \textwidth]{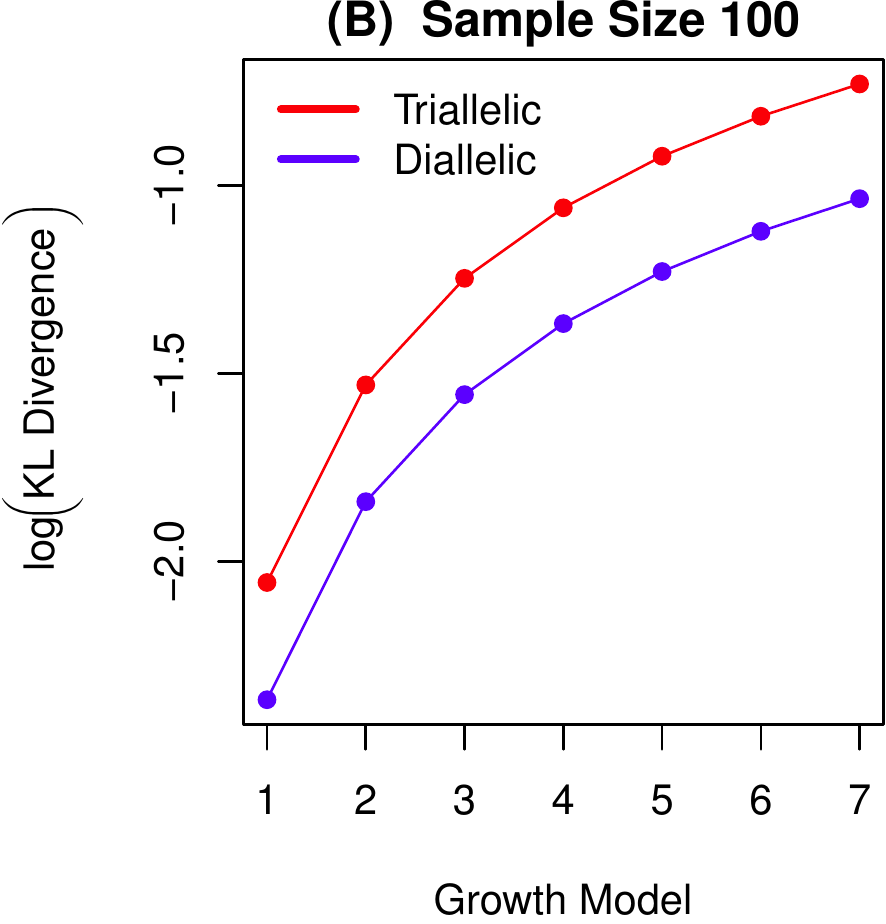}\\
~\\
\includegraphics[width = 0.48 \textwidth]{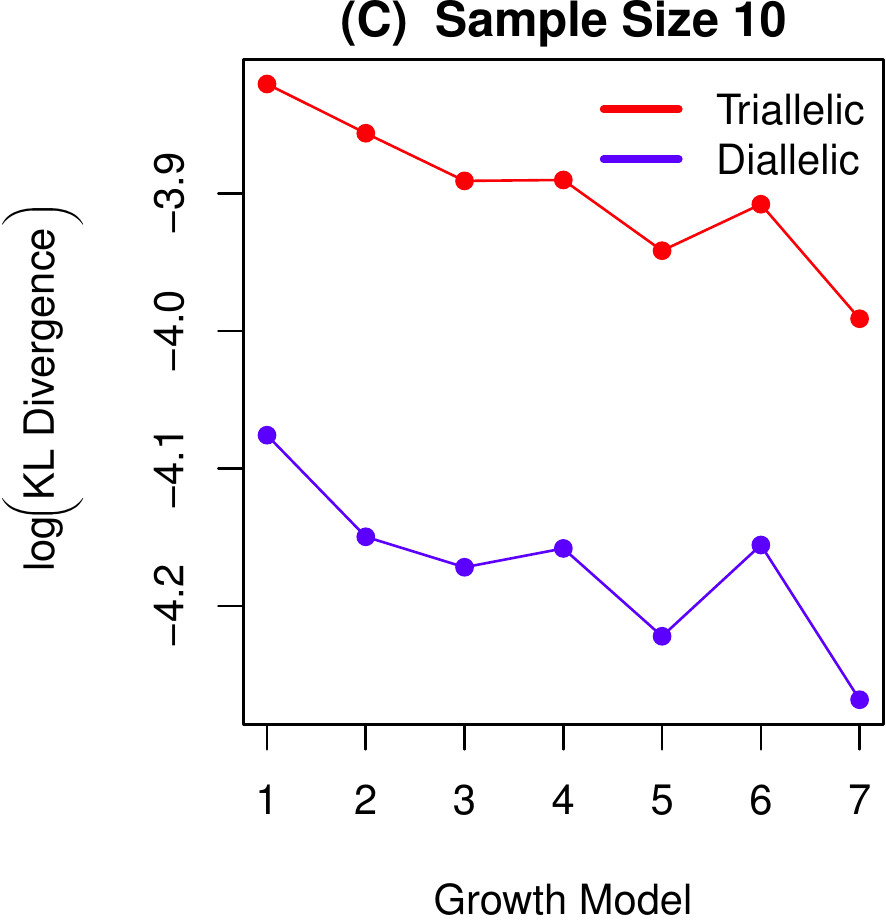}\hfill
\includegraphics[width = 0.48 \textwidth]{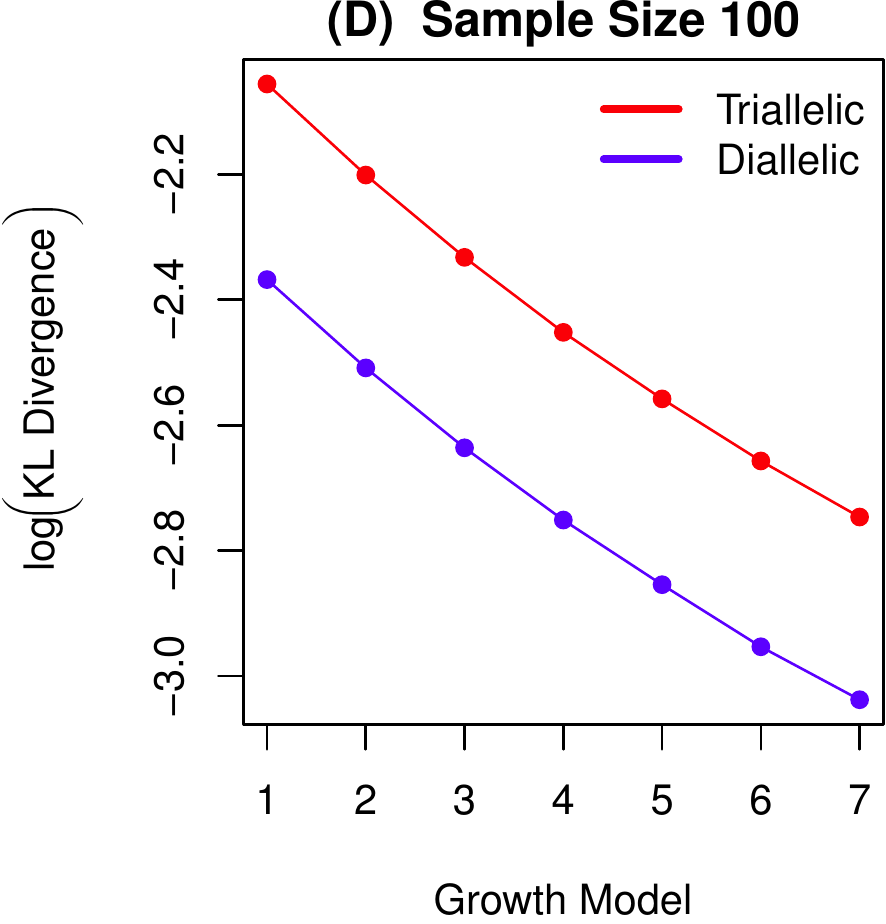}
\caption{\label{fig:expgrowth}KL divergence from one model of exponential population growth to another. (A) and (B) depict the (base $10$ log) KL divergence $D(\cG_i || \cG_0)$ of the sample frequency spectrum taken from a population under growth model $\cG_i$ from the sample frequency spectrum taken from a population of fixed size ($\cG_0$).  (C) and (D) depict the (base $10$ log) KL divergence $D(\cG_i || \cG_{i-1})$ of the sample frequency spectrum taken from a population under growth model $\cG_i$ from the sample frequency spectrum taken from a population under growth model $\cG_{i-1}$. Growth models $\cG_i$, $i = 1,\ldots, 7$ are defined in \fref{fig:expgrowthmodels}; results are shown for the diallelic (blue) and triallelic (red) sample frequency spectrum. 
}
\end{figure}

\subsection{Instantaneous growth} We next investigate in further detail the effect of sample size on KL divergence from a growth model to a model of fixed population size. For the growth model we assumed a function of historical human population growth as inferred by \citet{wil:etal:2005}, who assumed an instantaneous expansion of the population from an ancestral size $\nu N_0$ to a modern size $N_0$ a time $\tau$ ago. Using data from the Environmental Genome Project and working within the framework of the Poisson random field model \citep{saw:har:1992}, \citet{wil:etal:2005} inferred maximum likelihood estimates (MLEs) of $\hat{\nu} = 0.160$ and $\hat{\tau} = 0.00885$, the latter in units of $2N_0$ generations. (The authors estimated $N_0 \approx 51,340$ directly by comparing polymorphism and divergence data, in which case the latter estimate is calibrated as $\hat{\tau} = 908$ generations, or, further assuming a 20 year generation time, $\hat{\tau} \approx 18,200$ years.) We denote this model by $\cG_W$. Thus, given samples that actually stem from a population such as the one described by \citet{wil:etal:2005}, the KL divergence $D(\cG_W || \cG_0)$ quantifies the ability to distinguish that these samples do not come from a fixed-size population. This analysis therefore studies the effect of sample size on the ability to perform inference of population growth parameters under realistic settings in a problem of great interest, for both di- and tri-allelic sites.

\fref{fig:sampsize} illustrates that the KL divergence using triallelic spectra for the two models is much greater, for any sample size, than the divergence using the corresponding diallelic spectra. Furthermore, we see that the increase in KL divergence which results from the presence of the third allele at a triallelic site also grows with increasing sample size, at least up to sample sizes of $25$, before leveling off beyond $25$ and providing a consistent $\sim 97\%$ increase in KL divergence.

\begin{figure} \centering
\includegraphics[width =  0.75\textwidth]{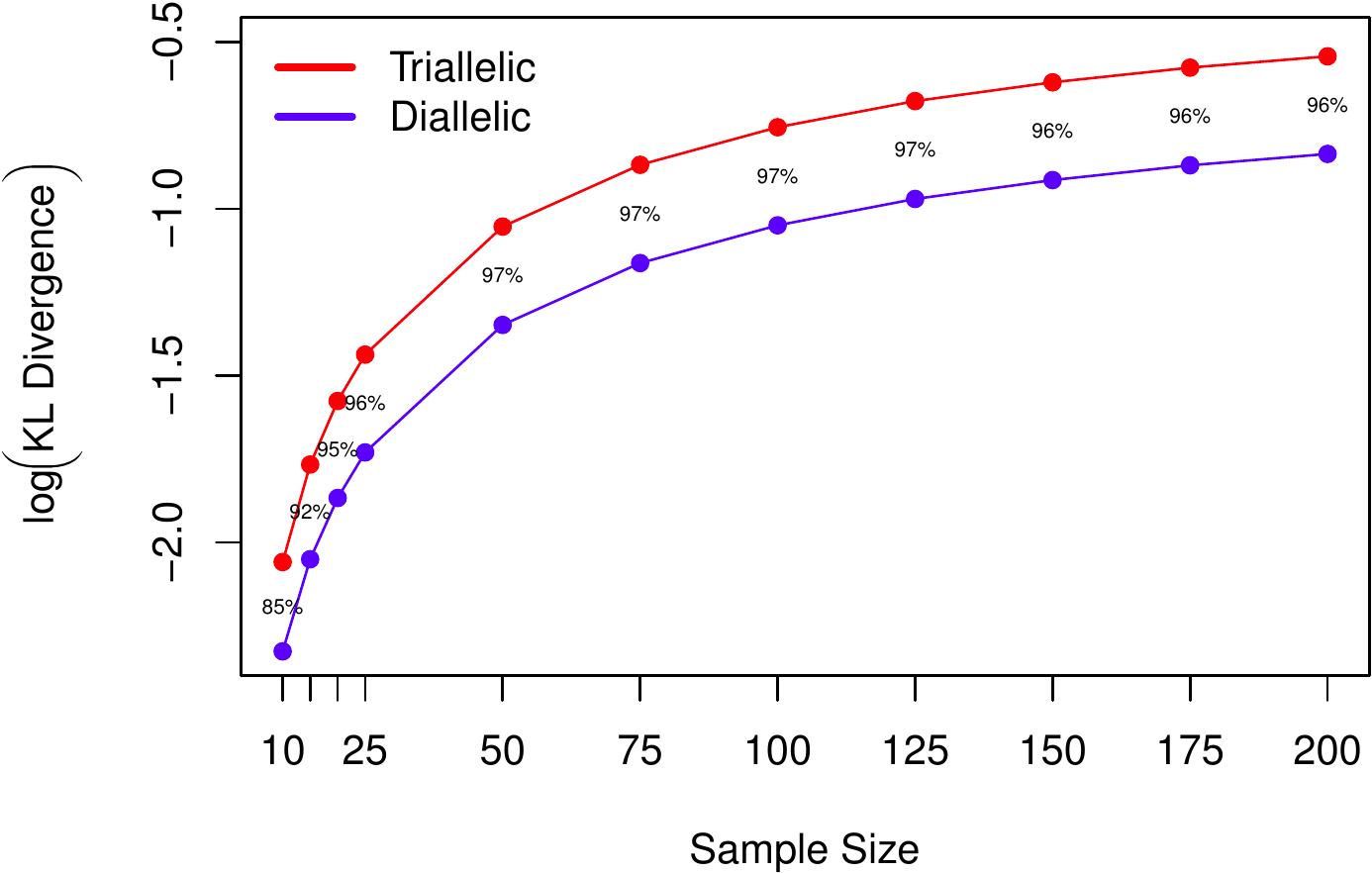}
\caption{\label{fig:sampsize}KL divergence $D(\cG_W || \cG_0)$ from a population model with the growth function of \citet{wil:etal:2005} to a population model of fixed size, for various sample sizes and using di- (blue) and tri- (red) allelic frequency spectra. The percentages between the curves denote the percent increase in KL divergence that results from computing divergence using triallelic spectra rather than diallelic spectra. 
}
\end{figure}

We further address the effect of the parameters of a model of instantaneous population size change, as follows. Adopting $\cG_W$ as a reference point, we examine variations in the two parameters $\nu$ and $\tau$. First the amount of instantaneous growth is varied while keeping the time of the size-change fixed to same value as $\cG_W$ (see Models 8--20 in \tref{tab:instgrowthmodels}), and subsequently, different times of size-change occurrence are examined while the amount of instantaneous growth is fixed so that the pre-change population size is $15\%$ of the post-change size (Models 21--29 in \tref{tab:instgrowthmodels}).

\begin{table}[t] 
\caption{\label{tab:instgrowthmodels}22 models of instantaneous population size-increase from $\nu N_0$ to $N_0$ at a fixed historical time point $\tau$. In models 8--20 the time $\tau$ is fixed at the same value as in the human population growth model proposed by \citet{wil:etal:2005} ($\tau = 0.0044$ in units of $4N_0$ generations), while in models 21--29 the magnitude of growth is fixed at $\nu = 0.15$.}
\begin{center}
\begin{tabular}{r|ccccccccccccc}
\hline
Model & 8 & 9 & 10 & 11 & 12 & 13 & 14 & 15 & 16 & 17 & 18 & 19 & 20\\
\hline
$\nu$ & 0.9 & 0.8 & 0.7 & 0.6 & 0.5 & 0.4 & 0.3 & 0.25 & 0.2 & 0.15 & 0.1 & 0.05 & 0.01\\
\hline 
\end{tabular}
\\[20pt]

\begin{tabular}{r|ccccccccc}
\hline
Model & 21 & 22 & 23 & 24 & 25 & 26 & 27 & 28 & 29\\
\hline
$\tau$ & 1.0 & 0.5 & 0.1 & 0.05 & 0.01& 0.005 & 0.001 & 0.0005 & 0.0001\\
\hline 
\end{tabular}
\end{center}
\end{table}

\begin{figure}[p]
\includegraphics[width = 0.48 \textwidth]{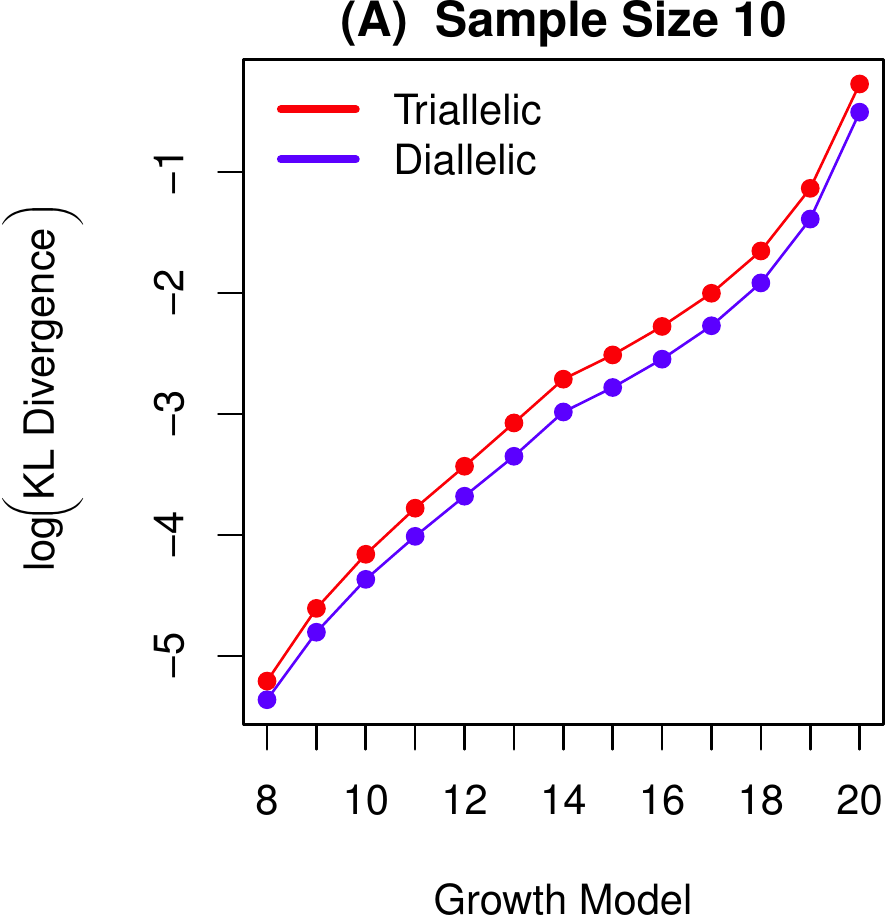}\hfill
\includegraphics[width = 0.48 \textwidth]{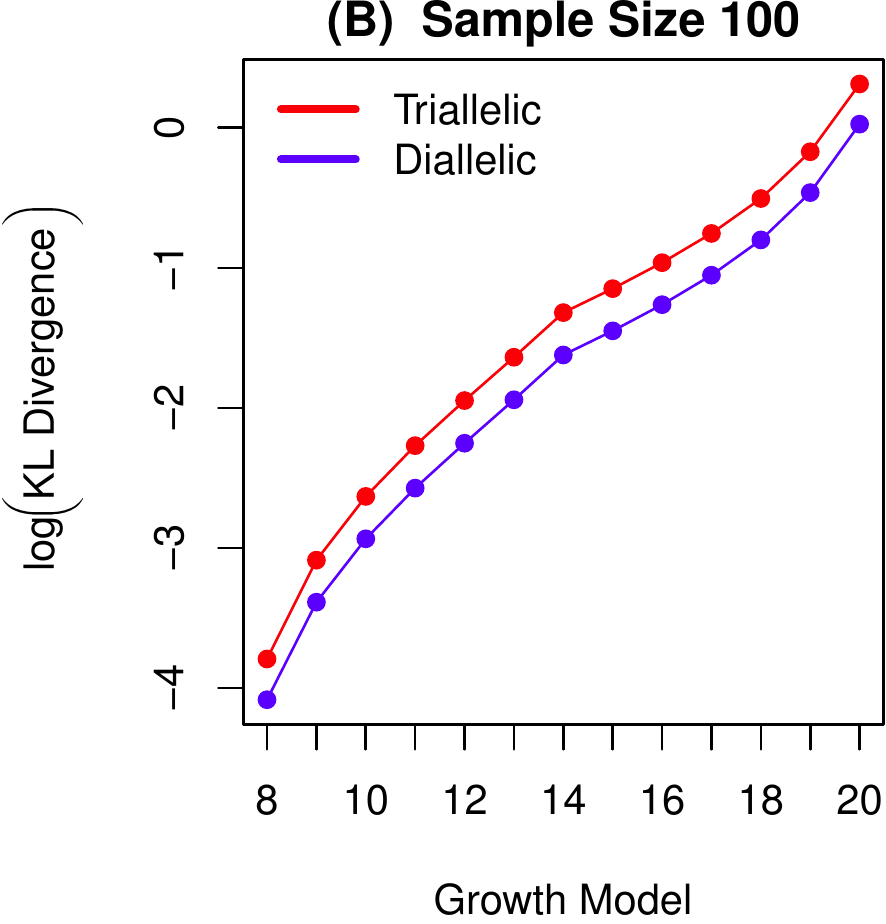}\\
~\\
\includegraphics[width = 0.48 \textwidth]{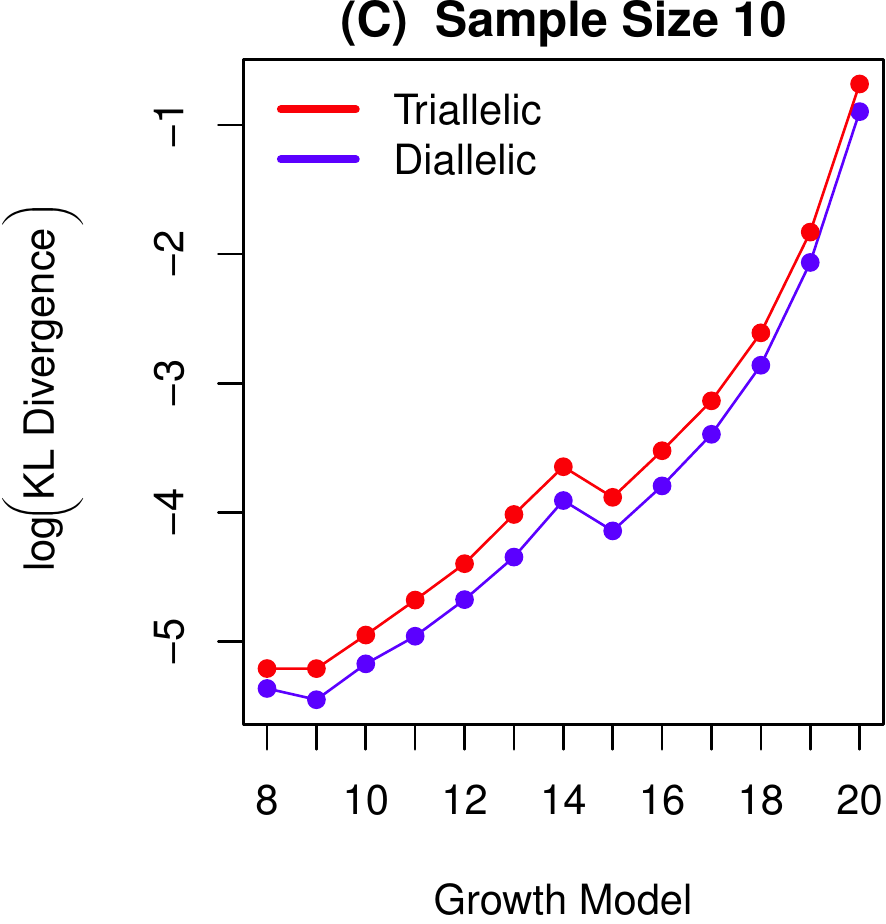}\hfill
\includegraphics[width = 0.48 \textwidth]{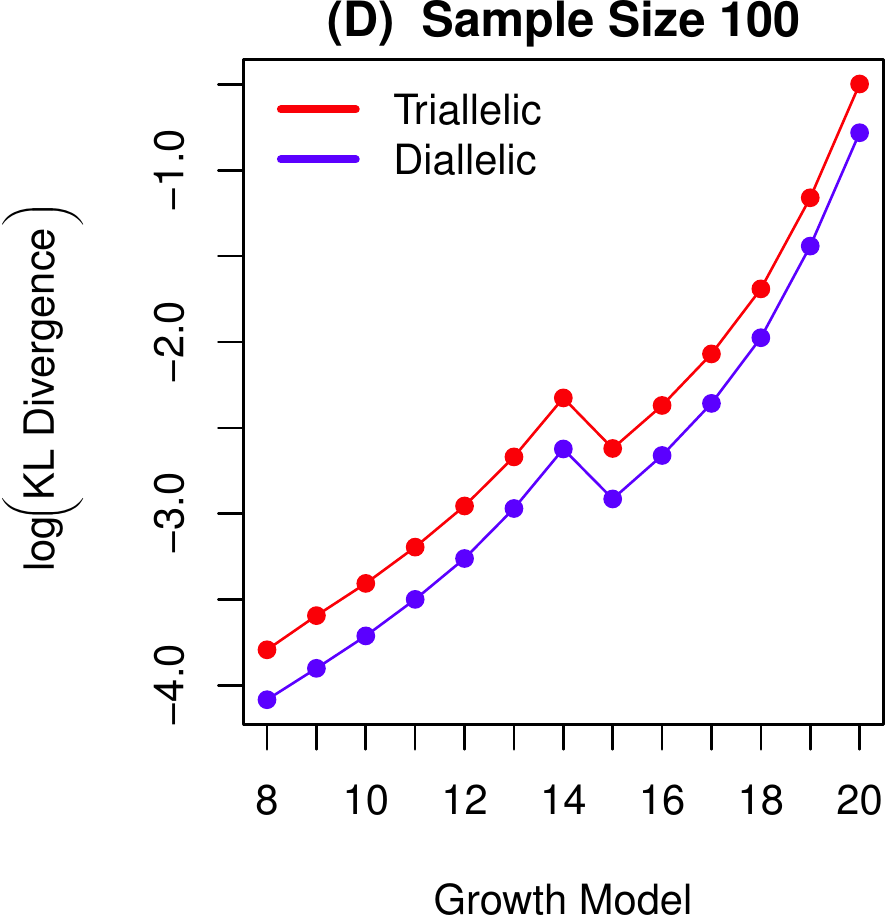}
\caption{\label{fig:suddengrowthfixedtime}KL divergence from one model of instantaneous population growth to another. (A) and (B) depict (base $10$ log) KL divergence $D(\cG_i || \cG_0)$ of the sample frequency spectrum taken from a population under growth model $\cG_i$ from the sample frequency spectrum taken from a population of fixed size ($\cG_0$), for each $i = 8,\ldots 20$. 
(C) and (D) depict (base $10$ log) KL divergence $D(\cG_{i-1} || \cG_i)$ of the sample frequency spectrum taken from a population under growth model $\cG_i$ from the sample frequency spectrum taken from a population under growth model $\cG_{i-1}$, for $i = 8,\ldots, 20$ (where we define the reference growth model before model 8 to simply be a demography with fixed-population size). 
Results are shown for the diallelic (blue) and triallelic (red) sample frequency spectrum. }
\end{figure}

Repeating the steps of our analysis of the exponential growth models, we computed KL divergences $D(\cG_i || \cG_0)$ for $i = 8,\ldots, 29$ (Figures \ref{fig:suddengrowthfixedtime}A, \ref{fig:suddengrowthfixedtime}B, \ref{fig:suddengrowthfixedrate}A, and \ref{fig:suddengrowthfixedrate}B), and to investigate the potential benefit of triallelic spectra in the more subtle problem of distinguishing between two instantaneous population growth models with different degrees of growth, we computed $D(\cG_i || \cG_{i-1})$ for each $i = 9,\ldots, 20$ and $i = 22, \ldots, 29$ (Figures \ref{fig:suddengrowthfixedtime}C, \ref{fig:suddengrowthfixedtime}D, \ref{fig:suddengrowthfixedrate}C and \ref{fig:suddengrowthfixedrate}D).

\begin{figure}[p]
\includegraphics[width = 0.48 \textwidth]{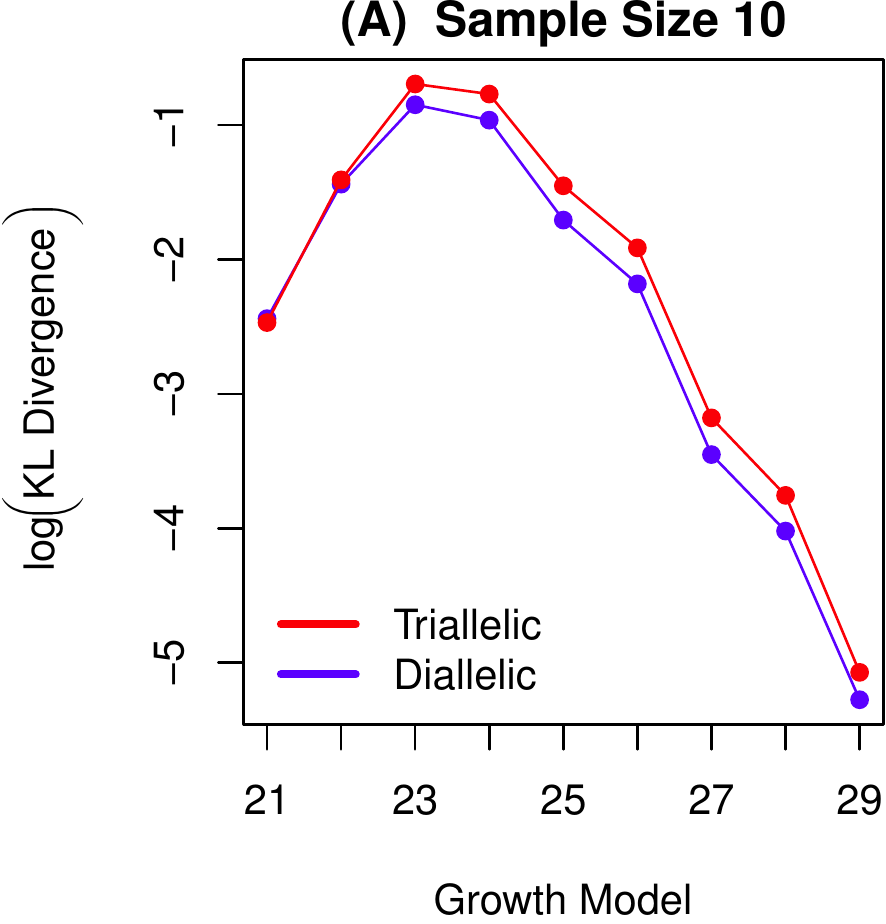}\hfill
\includegraphics[width = 0.48 \textwidth]{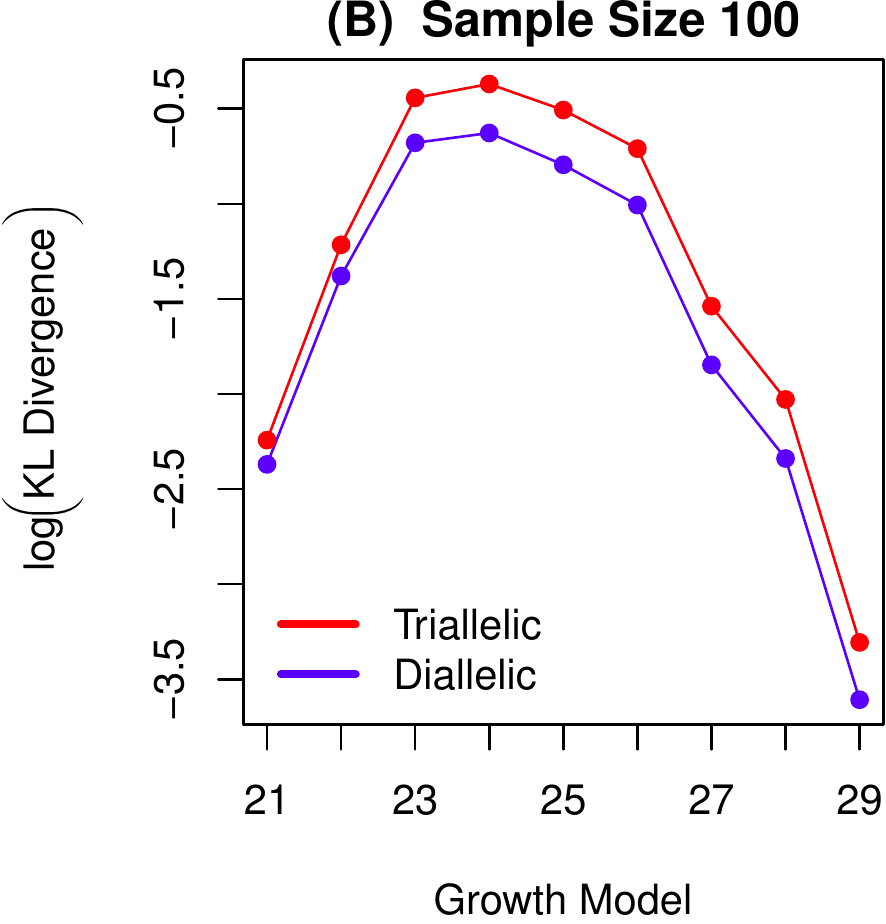}\\
~\\
\includegraphics[width = 0.48 \textwidth]{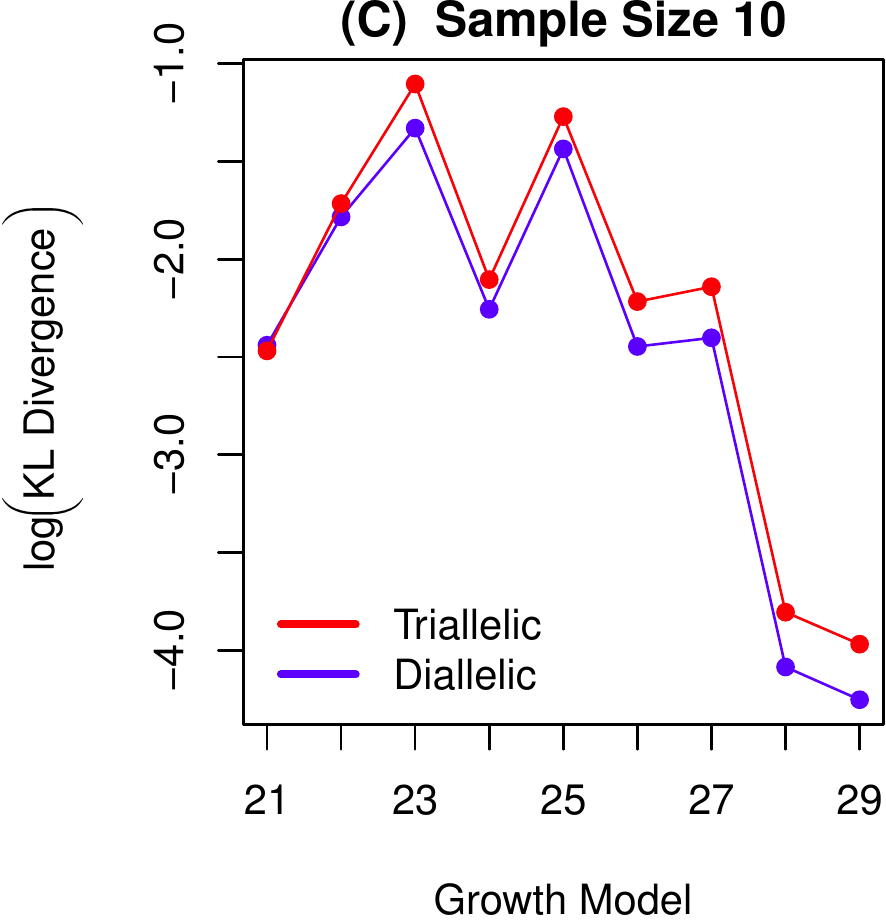}\hfill
\includegraphics[width = 0.48 \textwidth]{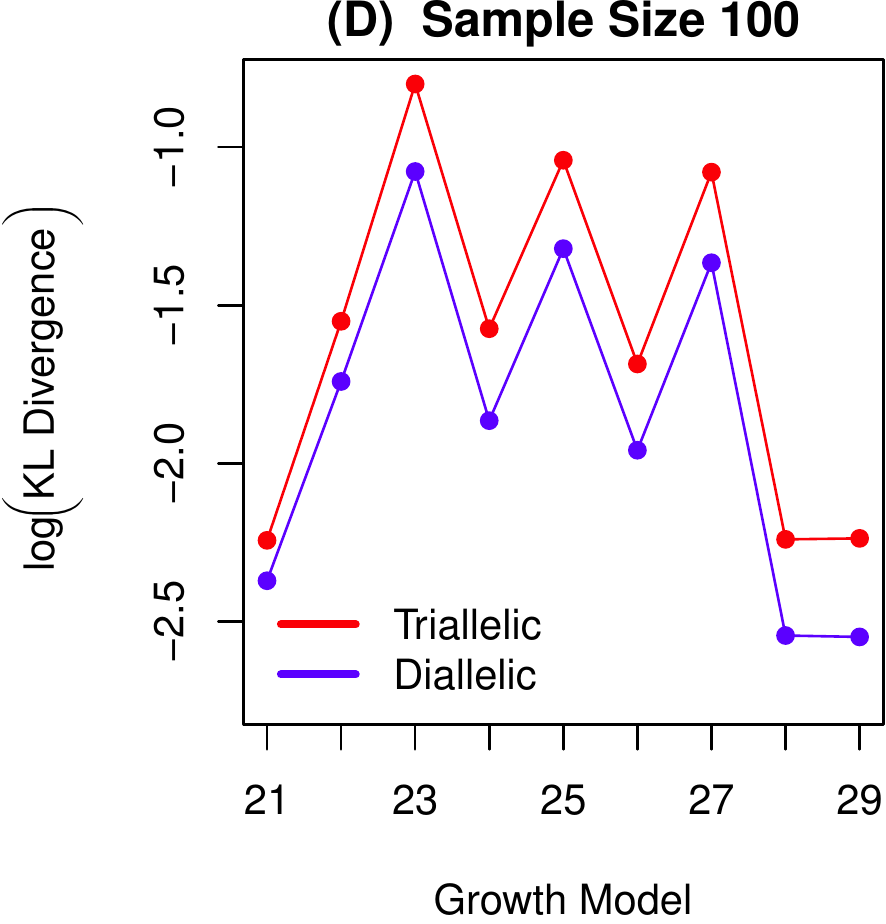}
\caption{\label{fig:suddengrowthfixedrate}KL divergence from one model of instantaneous population growth to another, as in \fref{fig:suddengrowthfixedtime} but this time for models $i = 21,\ldots, 29$.}
\end{figure}

From Figures \ref{fig:suddengrowthfixedtime} and \ref{fig:suddengrowthfixedrate}, we again find that KL divergence from the spectrum under one growth model to another is larger when we use a triallelic, rather than diallelic, sample frequency spectrum, though this advantages fades as the time of the sudden size-change is moved extremely far into the past while the amount of instantaneous growth is kept constant (as in Models 21 and 22 in \fref{fig:suddengrowthfixedrate}). The advantage of the triallelic sample frequency spectrum is even more pronounced for sample size $100$ than for sample size $10$. The mean KL divergence $D(\cG_i || \cG_{i-1})$ (over $i=8,\dots,20$; see \fref{fig:suddengrowthfixedtime}C and \ref{fig:suddengrowthfixedtime}D) is increased by $79\%$ when triallelic spectra are used in place of diallelic spectra for samples of size $10$ (and this divergence is increased by $97\%$ for sample size $100$). For models 21--29 (Figures \ref{fig:suddengrowthfixedrate}C and \ref{fig:suddengrowthfixedrate}D, the mean KL divergence $D(\cG_i || \cG_{i-1})$ is increased by $56\%$ through the inclusion of the third allele for samples of size $10$ (and the divergence is increased by $83\%$ for sample size $100$). Thus, the inclusion of the third allele in triallelic spectra contains information which may considerably increase our power to discern between competing instantaneous growth models with similar parameters.

\section{Application II: Simultaneous mutation model}
\subsection{Theory} As discussed above, \citet{hod:eyr:2010} propose that there exists another mechanism of mutation responsible for the observed excess of triallelic sites in samples of human genomes: the simultaneous generation of two new alleles within a single individual. It is estimated that this mechanism is responsible for the generation of approximately 3\% of all human SNPs. In support of this hypothesis they developed a phylogenetic statistic to test whether the two minor alleles of a triallelic site are closer to each other on a reconstructed phylogenetic tree than would be expected by chance. Using this test, they find significant evidence at the 5\% level for proximity of the minor alleles when probabilities are combined across all triallelic sites in their data, although the null hypothesis of two independent mutation events is rejected at a rate close to the nominal 5\% when each of 113 triallelic sites is tested independently. There are, however, some limitations to this test. First, as \citet{hod:eyr:2010} observe, phylogenies are reconstructed using local haplotype information but ignoring the confounding effects of recombination. Second, it uses the mean branch length between leaf nodes subtended by minor alleles as an indirect measure of the branch length between the minor alleles themselves. Finally, it uses the \emph{proximity} of two mutation events on the phylogeny as evidence for what is in fact a stricter hypothesis of \emph{colocation}. In this section we use our results on the triallelic frequency spectrum to develop a complementary test which does not suffer from these issues. Our approach is to find the sample frequency spectrum as predicted by the simultaneous mutation mechanism and then to compare it with the results of \thmref{thm:main} via a likelihood ratio test. It will be clear that the two spectra give very different predictions, particularly with regard to the expected number of singleton alleles in a sample. 

The key observation which enables us to obtain the frequency spectrum under this model is as follows. Suppose there exists a mechanism whereby two new alleles are produced within a single individual, such as that described in the Introduction: a single DNA duplex within a diploid cell experiences subsequent mutations of both of the nucleotides within a base pair. The duplex then undergoes replication so that two new alleles are produced (\fref{fig:hod:eyr:2010}). Now, the two alleles are observed in a sample taken from the population in the present day. The individual responsible for the creation of these alleles must have been an ancestor of individuals in the sample carrying \emph{either} of the derived alleles. Moreover, this ancestor was the \emph{most recent common ancestor} of any pair of individuals carrying the two distinct derived alleles. The genealogy relating the sample at this site must be of the form illustrated in \fref{fig:simultaneous}; in particular, the simultaneous mutation event coincides with the coalescence node uniting the two clades defined by individuals carrying the two derived alleles. We can therefore condition on this coincidence event in deriving the sample frequency spectrum under this model, by choosing uniformly amongst the $n-1$ coalescence nodes. As noted by \citet{hod:eyr:2010}, the probability that both products of a single human meiosis leave descendants in the following generation is negligible, so the posited simultaneous mutation event is presumed to occur during the mitotic phase of germ-line development.

\begin{figure}[t]
\centering
\includegraphics{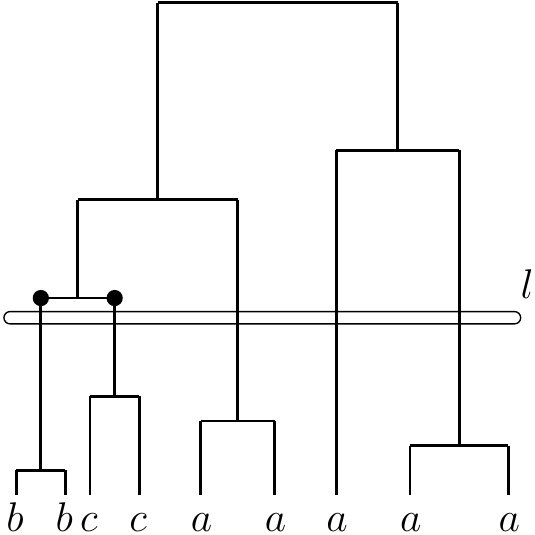}
\caption{\label{fig:simultaneous}A coalescent tree with one simultaneous mutation event. The allele of each leaf is annotated. Also annotated is the variable $l$ (here, $l=5$), the number of lineages ancestral to the sample just prior to the simultaneous mutation event.}
\end{figure}

In its full generality, the mutations under this model are parametrized by a $K \times \binom{K}{2}$ transition matrix $\bfQ = (Q_{i,\{j,k\}})$ whose $(i,\{j,k\})$th entry specifies the probability that a simultaneous mutation affecting allele $i$ gives rise to alleles $j$ and $k$. For notational simplicity we assume $Q_{i,\{j,k\}} = 0$ if $i,j,k$ are not all distinct---it is straightforward to make the appropriate modifications to relax such an assumption. In this setting we have the following theorem.
\begin{theorem}
	\label{thm:sim}
	Let $n_a\bfe_a + n_b\bfe_b + n_c\bfe_c$ denote a triallelic sample, and let $E'_s$ denote the event that there were $s$ instances of the mechanism of simultaneous mutation in the genealogical history relating the sample. Then the sample frequency spectrum is
	\begin{align}
		\phi_\text{S}(n_a,n_b,n_c) &= \bbP(\bfn = n_a\bfe_a + n_b\bfe_b + n_c\bfe_c\mid O_3, E'_1),\notag\\
		&= Q_{a,\{b,c\}}\cdot\frac{2n}{n-2}\cdot\frac{1}{(n-n_a-1)(n-n_a)(n-n_a+1)}.\label{eq:sim}
	\end{align}
\end{theorem}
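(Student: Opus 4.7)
The plan is to exploit the paper's observation that, under $E'_1$, the simultaneous mutation coincides with one of the $n-1$ coalescence nodes $V$ of the underlying Kingman coalescent, chosen uniformly at random. The event $O_3$ (observing three alleles) requires $n_a \geq 1$, equivalently that $V$ not be the root; conditioning on $O_3$ therefore restricts $V$ to be uniform over the $n-2$ non-root coalescence nodes. Given the mutation type is $\{b,c\}$ (contributing the factor $Q_{a,\{b,c\}}$), one child subtree of $V$ receives allele $b$ and the other $c$ via a uniform $1/2$--$1/2$ assignment, while leaves outside $V$'s subtree retain the ancestral allele $a$. I will compute $\bbP(\bfn \mid O_3, E'_1, \{b,c\})$ by factoring it as $\bbP(S = s) \cdot \bbP(n_b, n_c \mid S = s)$, where $S = S(V)$ is the size of the subtree rooted at $V$ and $s = n - n_a$.

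For $\bbP(S = s)$ under uniform choice of a non-root node, this equals $\mathcal{M}_s/(n-2)$ for $s \in \{2,\ldots,n-1\}$, where $\mathcal{M}_s$ denotes the expected number of internal nodes with subtree size $s$ in a Kingman coalescent on $n$ leaves (all such nodes are non-root). Using exchangeability of the leaves, $\mathcal{M}_s = \binom{n}{s}\,\bbP(T \text{ is a subclade})$ for any fixed $T \subseteq \{1,\ldots,n\}$ with $|T| = s$. I derive $\bbP(T \text{ subclade})$ via a recursion on $n$: at the first coalescence event (reducing $n$ to $n-1$ lineages), the probabilities of an internal-$T$, internal-$T^c$, and crossing merge are $\binom{s}{2}/\binom{n}{2}$, $\binom{n-s}{2}/\binom{n}{2}$, and $s(n-s)/\binom{n}{2}$ respectively; a crossing merge immediately rules out $T$, while the other two reduce to a coalescent on $n-1$ leaves with $|T|$ updated accordingly. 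The recursion has closed form $\bbP(T \text{ subclade}) = 2/((s+1)\binom{n-1}{s-1})$ for $2 \leq s \leq n-1$, which yields $\mathcal{M}_s = 2n/(s(s+1))$ in that range.

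For $\bbP(n_b, n_c \mid S = s)$, the key fact is that conditional on $S(V) = s$, the subtree at $V$ is itself distributed as a Kingman coalescent on its $s$ labeled leaves (since the merges internal to the subtree are uniform pair-mergers among the $s$ leaves within). The pre-root (two-lineage) partition distribution of that sub-coalescent gives unordered root-children sizes $\{y, s-y\}$ with probability $2/(s-1)$ when $y \neq s-y$ and $1/(s-1)$ when $y = s-y$; the uniform $b,c$ assignment exactly halves the former (either child may be labeled $b$), collapsing the joint distribution to $\bbP(n_b = y, n_c = s-y \mid S = s) = 1/(s-1)$ uniformly over $y \in \{1,\ldots,s-1\}$. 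Combining with the previous step gives $\bbP(\bfn \mid O_3, E'_1, \{b,c\}) = 2n/\bigl((n-2)(s-1)s(s+1)\bigr)$, and multiplying by $Q_{a,\{b,c\}}$ yields the theorem.

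I expect the main obstacle to be the combinatorial evaluation of $\mathcal{M}_s$ via the subclade probability, which requires either a careful induction on $n$ or appeal to a classical Kingman coalescent result. A subtle but essential point is correctly recognizing that $O_3$ (rather than just $E'_1$) is what excludes the root-mutation case $s = n$ and produces the factor $1/(n-2)$ rather than $1/(n-1)$; overlooking this conditioning would leave a spurious factor of $(n-1)/(n-2)$ in the answer.
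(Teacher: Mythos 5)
Your proof is correct, and it takes a genuinely different route from the paper's. Both arguments begin from the same observation (under $E'_1$ the mutation sits at one of the $n-1$ coalescence nodes uniformly at random, and $O_3$ excises the root, giving the $\frac{n-1}{n-2}$ correction), but they then diverge. The paper parametrizes the mutated node by its \emph{level}: it conditions on $L=l$, the number of ancestral lineages immediately below the node, invokes the Griffiths--Tavar\'e result that the leaf counts of those $l$ lineages are uniform over the $\binom{n-1}{l-1}$ compositions of $n$, counts the $\binom{n_a-1}{l-3}$ compositions assigning $n_b$ and $n_c$ leaves to the two coalescing lineages, and finally sums $\sum_{l=3}^n\binom{n_a-1}{l-3}\binom{n-1}{l-1}^{-1}$ and simplifies. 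You instead parametrize by the \emph{subtree size} $s=n-n_a$ of the mutated node, which factors the answer as (expected number of size-$s$ clades)$/(n-2)$ times the basal-split law inside the clade; the product $\frac{2n}{s(s+1)}\cdot\frac{1}{n-2}\cdot\frac{1}{s-1}$ is already the stated formula, so no terminal summation is needed. The price is two auxiliary facts the paper never requires: the clade-count identity $\mathcal{M}_s=2n/(s(s+1))$ (your recursion via internal/crossing merges is sound, and I verified it against small cases), and the Markov branching property guaranteeing that, conditional on a set forming a clade, the induced subtree is again a Kingman topology so that its basal split is uniform on $\{1,\dots,s-1\}$. Note that this last step is really the same Griffiths--Tavar\'e uniform-composition fact with $l=2$ applied within the clade, so the two proofs ultimately rest on the same ingredient, deployed at different places. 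Your handling of the $O_3$ conditioning and of the $Q_{a,\{b,c\}}$ factor (which is independent of the tree and of $O_3$) matches the paper's.
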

\begin{proof}
	See the Appendix.
\end{proof}
We remark that in the above we conditioned on $E'_1$; equivalently one could introduce a rate parameter for the occurrences of simultaneous mutations and then let it go to zero after conditioning on $O_3$ only, in which case none of $E'_2$, $E'_3$, $\ldots$ contributes to the frequency spectrum.

Importantly, the frequency spectrum in \eref{eq:sim} depends on the distribution of topologies of coalescent trees but not on the distribution of $\bfT$. Thus, \thmref{thm:sim} continues to apply when we allow a general distribution of inter-coalescence times as in the null model, and in particular this includes a model of variable population size, $N_t$. To summarize: Under a model in which the population size $N_t$ is allowed to vary in time, the sample frequency spectrum of a triallelic site is given by \eref{eq:main} when the two mutation events occur independently and by \eref{eq:sim} when they occur simultaneously within a single individual. An example of the two spectra is shown in \fref{fig:spectrum_comparison}. Clearly, the largest difference occurs in the frequency class of double-singletons, $(n_a,n_b,n_c) = (n-2,1,1)$, which contributes 0.37 of the total probability mass under the simultaneous mutation mechanism compared with 0.09 when the two mutations occur independently along the tree. Other nearby configurations in which both derived alleles are at very low frequency are also overrepresented according to the simultaneous mutation mechanism by comparison with independent mutations, while configurations in which one or both derived alleles are at moderate frequency are slightly underrepresented. This underrepresentation is greatest for frequencies of the form $(n_a,1,n_c)$ and $(n_a,n_b,1)$, i.e. along the axes in \fref{fig:spectrum_comparison}.

\begin{figure}[t]
\centering
({\bf A}) \hfill ({\bf B}) \hfill \phantom{~}
{\includegraphics[width=\textwidth, clip = true, trim=40 5 45 5]{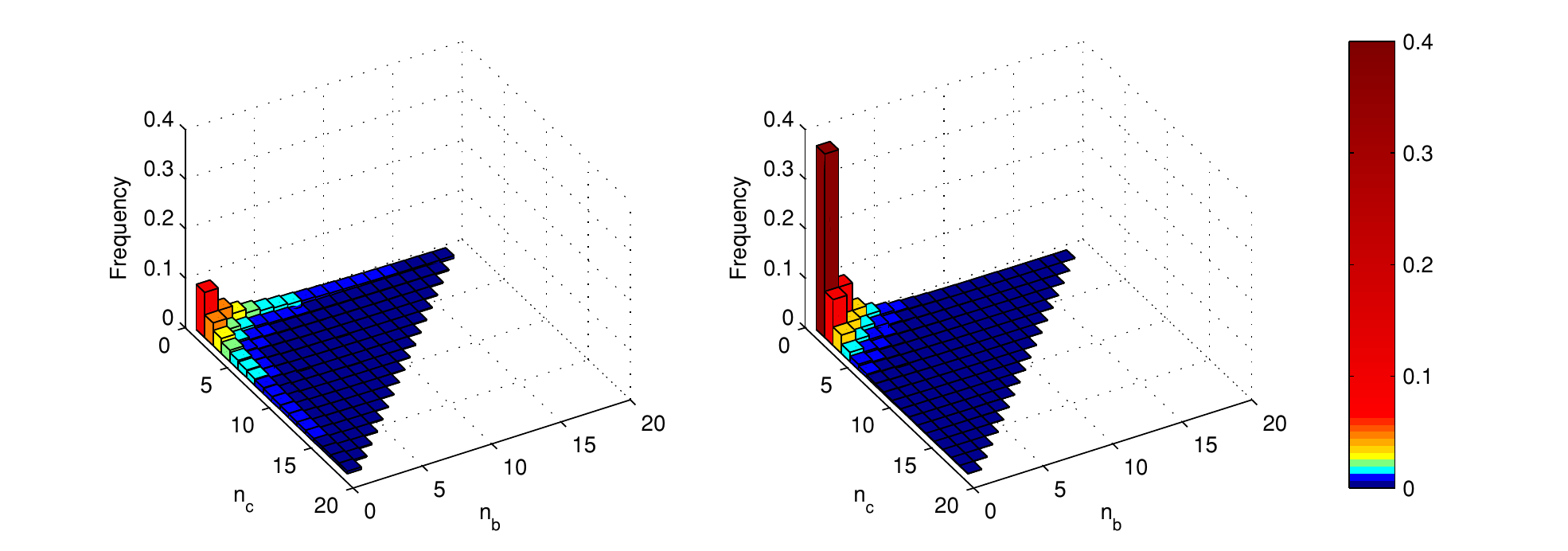}}
\caption{\label{fig:spectrum_comparison}The triallelic sample frequency spectrum when mutations occur ({\bf A}) independently and ({\bf B}) simultaneously, for a sample of size $n = 20$. In this example there are $K=3$ alleles $a$, $b$, and $c$, with uninformative mutation matrices: $P_{ab} = P_{ac} = P_{bc} = P_{cb} = \frac{1}{2}$, and $Q_{a,\{b,c\}} = 1$. The population size is constant in time.}
\end{figure}

\subsection{Likelihood ratio test of independent versus simultaneous mutation} Our goal now is to test for a relative excess of triallelic sites which conform to the frequency spectrum of the simultaneous mutation mechanism. We take as our null hypothesis that each triallelic site was generated by two independent mutation events, so that the frequency spectrum is given by $\phi_0$ [equation \eref{eq:main}]. An appropriate alternative is that some fraction, $\lambda > 0$, of triallelic sites arose as the result of a simultaneous mutation event. Under this model, the sample frequency spectrum is given by the mixture
\begin{equation}
	\label{eq:mixture}
	\phi_\lambda(n_a,n_b,n_c) = (1-\lambda)\phi_0(n_a,n_b,n_c) + \lambda\phi_\text{S}(n_a,n_b,n_c).
\end{equation}
Suppose we observe $M$ triallelic sites, and the $i$th site has configuration $(n^{(i)}_{a_i},n^{(i)}_{b_i},n^{(i)}_{c_i})$. If each pair of sites are sufficiently far apart that their genealogical histories are independent, then a likelihood ratio statistic for these data is
\begin{equation}
	\label{eq:Lambda}
\Lambda = \frac{\prod_{i=1}^M \phi_{\widehat{\lambda}}(n^{(i)}_{a_i},n^{(i)}_{b_i},n^{(i)}_{c_i})}{\prod_{i=1}^M \phi_0(n^{(i)}_{a_i},n^{(i)}_{b_i},n^{(i)}_{c_i})},
\end{equation}
where $\widehat{\lambda}$ is a MLE for $\lambda$, with $0 \leq \widehat{\lambda} \leq 1$. We reject the null hypothesis at level $\alpha$ if $\Lambda$ lies within the $100(1-\alpha)$th percentile tail of its null distribution. Unfortunately, the null distribution of $-2\ln\Lambda$ does not tend toward the usual $\chi^2_1$-distribution due to the possibility that the mixture parameter $\lambda$ lies on the boundary of its permissible set \citep{sel:lia:1987}. We thus employ bootstrap estimation to determine the null distribution of the test statistic by simulation (further described in Supporting Information File S1).

Our aim is to apply this test to empirical SNP data, so we will take $K = 4$ with alleles $\{\text{A, C, G, T}\}$. However, to apply the test we must also specify both transition matrices $\bfP$ and $\bfQ$. While $\bfP$ can be estimated by, for example, using the empirical frequencies of each type of mutation event inferred from diallelic SNPs and their corresponding outgroup alleles \citep{cha:etal:2012}, there is no guidance on how to choose $\bfQ$. In fact, we do not expect the test to be greatly influenced by the \emph{identities} of the three alleles at a site compared to the information contained in the sample counts themselves. Therefore we choose to eliminate the appearance of the entries of $\bfQ$ in $\Lambda$ by \emph{conditioning} on the identities of the observed alleles at each site. Formally, we replace $O_3$ in the definitions above with $O_3^{(a,b,c)}$, the event that three alleles are observed and these alleles are $a$, $b$, and $c$, with $a$ ancestral. This leads to the slightly modified test statistic $\widetilde{\Lambda}$, obtained by replacing the appearances of $\phi_\text{S}$ and $\phi_0$ in each of \eref{eq:mixture} and \eref{eq:Lambda} with
\begin{align*}
	\widetilde{\phi}_\text{S}(n_a,n_b,n_c) &= \bbP(\bfn = n_a\bfe_a + n_b\bfe_b + n_c\bfe_c\mid O_3^{(a,b,c)}, E'_1),\\
	\widetilde{\phi}_0(n_a,n_b,n_c) &= \lim_{\theta\to 0}\bbP(\bfn = n_a\bfe_a + n_b\bfe_b + n_c\bfe_c\mid O_3^{(a,b,c)}).
\end{align*}
By repeating the reasoning that led to expressions for $\phi_\text{S}$ and $\phi_0$, we obtain $\widetilde{\phi}_\text{S}$ from $\phi_\text{S}$ by dividing by $Q_{a,\{b,c\}}$ in \eref{eq:sim}, and $\widetilde{\phi}_0$ from $\phi_0$ by replacing $\kappa_{j,k}$ in \eref{eq:main} with
\[
\widetilde{\kappa}_{j,k} = \left[P_{ab}P_{bc} + P_{ac}P_{cb}\right]D_{j,k} + \left[2P_{ab}P_{ac}\right]G_{j,k}.
\]
The resulting test statistic $\widetilde{\Lambda}$ is independent of $\bfQ$ and is more robust than $\Lambda$ to the choice of $\bfP$. 

\subsection{Data} 
To apply our test to SNP data we first need to specify $\bfP$, and to do this we followed the procedure described in \citet{cha:etal:2012}. Their method requires empirical counts of each type of diallelic SNP with the ancestral allele at each SNP specified, as well as the overall abundance of each nucleotide in the genome. To obtain empirical diallelic SNP data we used the Genome Variation Server (v6.01) \citep{GVS:Aug2011}. We obtained each diallelic SNP from the GVS database, discarding those for which the orthologous chimpanzee allele was unavailable or did not match any of the human alleles. In order to restrict our attention to neutral mutation events occurring at a typical genomic rate, we further discarded SNPs at which one of the alleles would produce a CpG dinucleotide; we discarded SNPs residing in coding regions; and in order to keep our estimate of $\bfP$ independent of the test dataset we discarded sites at which more than two alleles were observed. Assuming that the chimp carries the ancestral allele at each site polymorphic in humans, and that each remaining SNP represents a single mutation event from the ancestral to the derived allele, we were left with the following counts of each type of mutation event:
\begin{align*}
	N_{A\to T} &= 2299 & N_{A\to C} &= 1886 & N_{A\to G} &= 7226\\
	N_{T\to A} &= 2238 & N_{T\to C} &= 6956 & N_{T\to G} &= 1960\\
	N_{C\to A} &= 2819 & N_{C\to T} &= 9940 & N_{C\to G} &= 2395\\
	N_{G\to A} &= 9931 & N_{G\to T} &= 2870 & N_{G\to C} &= 2394. 
\end{align*}
These counts can be converted to rates of mutation by comparison with the overall genomic abundance of each type of nucleotide, which were obtained from the UCSC Genome Browser (hg19) \citep{ken:etal:2002:Aug2011}, after excluding both CpG dinucleotides and the Y chromosome and weighting the counts for the X chromosome by $3/4$. This left the following counts of each type of nucleotide in the human genome: $N_A = 835,878,173$ (30.1\%), $N_T = 836,874,687$ (30.0\%), $N_C = 552,795,868$ (19.9\%), and $N_G = 553,090,147$ (19.9\%). Using these values together with the mutation counts given above, we obtained the following empirical $\bfP$ matrix using the method as described in \citet{cha:etal:2012}: 
\begin{equation*}
\bfP = \;\bordermatrix{\phantom{AA} & \text{A} & \text{T} & \text{C} & \text{G}\cr
                \text{A} & 0.503 & 0.100 & 0.082 & 0.315\cr
                \text{T} & 0.097  & 0.515 & 0.303 & 0.085 \cr
                \text{C} & 0.186 & 0.655 & 0.002 & 0.158\cr
                \text{G} & 0.654  & 0.189 & 0.158 & 0}.
\end{equation*}

We reanalyzed the triallelic dataset of \citet[their Table S1]{hod:eyr:2010}, which comprised 113 triallelic sites. These were in turn obtained from 896 nuclear genes sequenced as part of the Environmental Genome Project \citep{NIEHS:Aug2011} and the SeattleSNPs project \citep{seattleSNPs:Aug2011}, which provide high quality resequencing data, avoiding problems such as ascertainment bias. Only sites of high quality ($Q > 25$), outside CpG dinucleotides, and outside coding regions were included in the data. Orthologous chimpanzee alleles corresponding to each site in the data were kindly provided to us by Alan Hodgkinson; sites for which the chimp allele was unavailable were excluded from our analysis, leaving $M = 96$ triallelic sites. Since these sites originate from different experiments using different population panels, the sample size varied across sites. The minimum, mean, and maximum sample sizes across the 96 triallelic sites were 71, 160, and 190, respectively.

To compute $\Lambda$, the required formulas are given in terms of the joint moments $\bbE[T_jT_k]$, and as before we pre-compute these numerically. This pre-computation step can be reused for different choices of $\lambda$ and across segregating sites with the same sample size, and so it does not add to the computational burden significantly.

The most striking feature of the historical human population size is its recent rapid growth \citep{kei:cla:2012}. To examine this, we used the model $\cG_W$ of \citet{wil:etal:2005} as described above. Computing the MLE $\widehat{\lambda}$ under this demographic model yielded $\widehat{\lambda} = 0$, conflicting with the conclusions of \citet{hod:eyr:2010}. This illustrates the importance of accounting for demographic changes when using frequency spectrum data; had we assumed a population of constant size we would find $\widehat{\lambda} = 0.21$ ($p < 0.001$; see Supporting Information File S1, in which we also examine the robustness of this result to assumptions about $\bfP$ and to potential sequencing errors), in better agreement with \citet{hod:eyr:2010}'s estimate of $\lambda \approx 0.5$. The decrease in the value of our estimated mixture parameter can be explained by the fact that singleton sites---and, similarly, doubly-singleton triallelic sites---are relatively more probable under a null model with population growth than one without. Thus, the abundance of doubly-singleton sites which previously gave rise to an extreme likelihood ratio statistic are now explicable under the null model.

A further potential complication of the data used here is population subdivision with migration between subpopulations. Analytic results for the frequency spectrum under complex demography are unavailable even ignoring the issue of recurrent mutations [but for recent progress on this problem see \citet{che:2012}], and so in Supporting Information File S1 we further investigate this issue by a simulation study. We find $\widehat{\lambda} = 0.16$, though the lack of availability of subpopulation labels with the data renders the result non-significant ($p \approx 0.42$; Supporting Information File S1). 

\section{Discussion}
In this paper we have obtained closed-form expressions for the frequency spectrum of a site experiencing two mutation events, of which triallelic sites are an important example, and allowing for the possibility of a varying historical population size; the results generalize those of \citet{gri:tav:1998} and \citet{jen:son:2011:TPB}. We applied our formulas to the question of the ability of the frequency spectrum to discern between closely-related models of population growth, and to the question of the mechanism of mutation that gives rise to triallelic sites.

Demographic inference from SNP data has thus far relied solely on diallelic sites. As sample sizes in sequencing studies grow with the falling cost of the technology, it is likely that an ever-increasing fraction of segregating sites found will be triallelic. In this article, we have illustrated that the triallelic sampling frequency spectrum is more sensitive to historical population size changes under both exponential and instantaneous growth models, and thus it seems likely that improved estimates of population growth parameters may be obtained by incorporating this growing number of triallelic sites in demographic inference analyses. Furthermore, the increased sensitivity of the triallelic spectrum over the diallelic spectrum to discern between different growth models becomes more exaggerated as sample sizes are increased. While triallelic sites remain relatively rare compared to diallelic ones, the above analysis suggests that they are more valuable per site in distinguishing between a variety of demographic models. We hope that the ability of triallelic sites to fine-tune between competing growth models will be especially useful when looking at recent \emph{super}-exponential growth \citep{kei:cla:2012}, though convenient software for this type of growth is not yet available.

We also found the frequency spectrum under a model in which the two derived alleles of a triallelic site may be generated simultaneously \citep{hod:eyr:2010}, and we have developed a likelihood ratio test for the existence of such a mechanism.  This approach is parametrized by the mixture parameter $\lambda$ which represents the fraction of triallelic sites having arisen as a result of the simultaneous mutation mechanism. Assuming a simple randomly-mating population of constant size we find a MLE of $\widehat{\lambda} = 0.21$ which is significantly non-zero. We show that another explanation for the excess of doubly-singleton triallelic sites is rapid recent population growth, but when we posit a realistic demographic model which includes population subdivision and migration as well as recent growth we find only a minor adjustment to $\widehat{\lambda} = 0.16$, supporting the idea that at least some triallelic sites were generated as the result of simultaneous mutation event. This latter estimate is not however significantly different from 0, a state of affairs we can at least partly attribute to a considerable loss of power---the individuals sampled at a substantial fraction of triallelic sites in our dataset are lacking subpopulation labels. (In particular, none of the triallelic sites of the form $(n^{(i)}-2,1,1)$ remained triallelic after removing from the sample individuals of unknown origin.)

Because of the lack of power associated with the available data we treat these results with caution, and do not rule out the possibility that at least some sites were generated by a mechanism of simultaneous mutation. Indeed, most of the information about a mechanism generating an excessive number of triallelic sites is contained in the absolute \emph{number} of triallelic sites observed in the human genome. We did not use this quantity; instead we \emph{conditioned} on the observed number $M$ of triallelic sites and addressed a slightly different question: Given the excessive number of triallelic sites in the genome, is the frequency spectrum of some fraction of these sites consistent with the two derived alleles being generated simultaneously within a single individual? Even if this proposition is rejected, the question of why there is such an excess of triallelic sites remains. We are hopeful that the coming flood of (subpopulation-labeled) genomic data will enable us to address these questions with much improved power.

One lesson of our work is that rare alleles are as vital when looking at triallelic sites as for diallelic ones; singleton alleles at \emph{diallelic} sites are already the focus of other coalescent-based tests of neutrality \citep{fu:li:1993, ach:2009}. While earlier genotyping projects often chose to exclude sites below a given minor allele frequency, typically 5\%, more recent trends---improved sequencing technologies, larger sample sizes, and an interest in rare variants in their own right \citep{cir:gol:2010, cov:etal:2010,  kei:cla:2012, nel:etal:2012, ten:etal:2012}, should serve to make the appropriate data more readily available. As this trend continues, we expect our results to find further use as recurrent mutations manifest themselves more and more commonly. 

\section*{Acknowledgments}
We thank Christoph Theunert for helpful discussions, and Alan Hodgkinson for providing ancestral allele information to accompany their triallelic site data.  We also thank John Wakeley for helpful suggestions for improving the exposition of this paper.  This research is supported in part by a National Institutes of Health grant R01-GM094402, an Alfred P. Sloan Research Fellowship, and a Packard Fellowship for Science and Engineering.

\appendix
\section*{Appendix}
\begin{proof}[Proof of \eref{eq:gri:tav:1998}.]
	The arguments given here mimic in part those found in \citet{jen:son:2011:TPB}, and we refer the reader there for further details. In that paper we work with \emph{unordered} configurations, $\bfn$; here, the argument is easier to illustrate using \emph{ordered} configurations. We denote a random vector of $n$ alleles consistent with the unordered configuration $\bfn$ by $\bfv_\bfn$; by sampling exchangeability there are $n!/\prod_{i=1}^K n_i!$ equiprobable such vectors. First, denote the event that a single mutation occurred in the genealogical history relating the sample and that it gave rise to a derived allele $b$ by $E_1^{(b)}$. Now, inspection of \fref{fig:tree} tells us that any particular coalescent history with leaf configuration $\bfv_\bfn$ and consistent with $E_1^{(b)}$ must exhibit the following sequence of events going back in time, for some $l_a \in \{1,2,\ldots, n_a\}$:
\setlength{\leftmargini}{8mm}     
\begin{itemize}
	\item a collection of $n_a - l_a$ coalescence events of type $a$ lineages and a collection of $n_b-1$ coalescence events of type $b$ lineages, with the two collections in some interspersed ordering;
	\item a mutation event taking the single remaining type $b$ lineage to a type $a$ lineage;
	\item coalescence of the remaining $l_a + 1$ type $a$ lineages to a most recent common ancestor of the sample.
\end{itemize}
The quantity $l_a$ represents the number of extant lineages ancestral to samples of type $a$ at the time of the single mutation event. Suppose the first coalescence event is between two lineages whose allele at the leaves of the tree is $b$; such a coalescence occurs (amongst all possible coalescence events) with probability
\[
\frac{n_b(n_b-1)}{n(n-1)}.
\]
Continuing in this vein back to the most recent common ancestor, we find that the probability of a given sequence $\mathcal{C}_{\bfv_\bfn,l_a}$ of $n-1$ alleles corresponding to each coalescence event and which is consistent with $\bfv_\bfn$ and with $E_1^{(b)}$ satisfies
\[
\bbP(\mathcal{C}_{\bfv_\bfn,l_a}) = \frac{n_a!(n_a-1)!n_b!(n_b-1)!}{n!(n-1)!}l_a(l_a+1).
\]
For example, in \fref{fig:tree}, $\mathcal{C}_{\bfv_\bfn,l_a} = (b,a,a,b,b,a,a,a)$. In fact, for a fixed $l_a$ this probability is invariant with respect to permutations of the alleles in this sequence [i.e.~with respect to the way the coalescence events are interspersed; \citet{jen:son:2011:TPB}]. There are $\binom{n-l_a-1}{n_b-1}$ ways to intersperse the first $n-l_a-1$ coalescence events \citep[Lemma 3.1]{jen:son:2011:TPB}. 

Now, given $\mathcal{C}_{\bfv_\bfn,l_a}$, we require the probability that a single mutation event occurs on the correct branch of the tree, an event we denote by $M_{\bfv_\bfn}$. Since mutation events occur along the branches as a Poisson process of rate $\theta/2$, we have that
\[
\bbP(M_{\bfv_\bfn}\mid\mathcal{C}_{\bfv_\bfn,l_a}) = \bbE\left[\frac{\theta}{2}L_ne^{-\frac{\theta}{2}L_n}\cdot\frac{T_{l_a+1}}{L_n}\right] = \frac{\theta}{2}\bbE[T_{l_a+1}] + O(\theta^2),
\]
as $\theta \to 0$, where $L_n = \sum_{j=2}^n jT_j$ is the total branch length. 

 Putting all this together,
\begin{align}
	\bbP(\bfn = n_a\bfe_a + n_b\bfe_b, E_1^{(b)}) &= \binom{n}{n_b}\bbP(\bfv_\bfn, E_1^{(b)}),\notag\\
&=P_{ab}\binom{n}{n_b}\sum_{l_a=1}^{n_a}  \binom{n-l_a-1}{n_b-1} \bbP(\mathcal{C}_{\bfv_\bfn,l_a})\bbP(M_{\bfv_\bfn}\mid\mathcal{C}_{\bfv_\bfn,l_a}),\notag\\
&= \frac{\theta P_{ab}}{2}\sum_{l_a=1}^{n_a}\binom{n_a-1}{l_a-1}\binom{n-1}{l_a}^{-1}\bbE[T_{l_a+1}] + O(\theta^2). \label{eq:onemutationproof}
\end{align}
Summing over $n_a$, we find
\[
\bbP(E_1^{(b)}) = \sum_{n_a=1}^{n-1}\bbP(\bfn = n_a\bfe_a + n_b\bfe_b, E_1) = \frac{\theta P_{ab}}{2}\sum_{l_a=1}^{n-1}(l_a+1)\bbE[T_{l_a+1}]+ O(\theta^2),
\]
and hence, noting that
\[
\phi(i) := \lim_{\theta\to 0}\bbP[\bfn = (n-i)\bfe_a + i\bfe_b\mid E_1^{(b)}]  = \lim_{\theta\to 0}\frac{\bbP[\bfn = (n-i)\bfe_a + i\bfe_b,E_1^{(b)}]}{\bbP(E_1^{(b)})},
\]
substituting for the numerator and denominator, and letting $\theta \to 0$ yields the given result.
\end{proof}
Henceforward we denote the trinomial coefficient by $\binom{n}{i,j,k}$.
\begin{proof}[Proof of \lemmaref{lem:probs}.]
The argument here is very similar to that of the proof of \eref{eq:gri:tav:1998}. This time a coalescent tree compatible with $\bfv_\bfn$ and with $\EIIn^{(b,c)}$ must exhibit the following sequence of events \citep[Lemma 4.1]{jen:son:2011:TPB}:
\setlength{\leftmargini}{8mm}   
\begin{itemize}
\item $n_a - l_y$ coalescence events of type $a$ alleles, $n_b-m$ coalescence events of type $b$ alleles, and $n_c-1$ coalescence events of type $c$ alleles, in some interspersed ordering;
\item a mutation event reverting the sole remaining type $c$ allele to a type $b$;
\item $l_y-l_o$ coalescence events of type $a$ alleles and $m$ coalescence events of type $b$ alleles;
\item a mutation event reverting the sole remaining type $b$ allele to a type $a$;
\item	coalescence of the remaining $l_o + 1$ type $a$ lineages to a most recent common ancestor of the sample.
\end{itemize} 
Here, $l_y$ is the number of extant type $a$ lineages at the time of the younger mutation event, $l_o$ is the number of type $a$ lineages at the time of the older mutation event, and $m$ is the number of type $b$ lineages at the time of the younger mutation event (\fref{fig:trees}). Arguing as above, any one of the $\binom{n-l_y-m-1}{n_a-l_y, n_b-m, n_c-1}\binom{m+l_y-l_o}{m}$ compatible sequences $\mathcal{C}_{\bfv_\bfn,m,l_y,l_o}$ satisfies
\[
\bbP(\mathcal{C}_{\bfv_\bfn,m,l_y,l_o}) = \frac{n_a!(n_a-1)!n_b!(n_b-1)!n_c!(n_c-1)!}{n!(n-1)!}m(m+1)l_o(l_o+1),
\]
and the two mutation events land on the correct pair of branches with probability
\[
\bbP(M_{\bfv_\bfn}\mid\mathcal{C}_{\bfv_\bfn,m,l_y,l_o}) = \bbE\left[\left(\frac{\theta}{2}L_n\right)^2\frac{e^{-\frac{\theta}{2}L_n}}{2!}\cdot 2\frac{T_{m+l_y+1}}{L_n}\frac{T_{l_o+1}}{L_n}\right] = \frac{\theta^2}{4}\bbE[T_{m+l_y+1}T_{l_o+1}] + O(\theta^3).
\]
Hence, following the reasoning that led to \eref{eq:onemutationproof}, we find
\begin{multline*}
\bbP(\bfn = n_a\bfe_a + n_b\bfe_b + n_c\bfe_c, \EIIn^{(b,c)}) =
P_{ab}P_{bc}\binom{n}{n_a,n_b,n_c}\\
\times\sum_{l_y=1}^{n_a}\sum_{l_o=1}^{l_y}\sum_{m=1}^{n_a}\binom{n-l_y-m-1}{n_a-l_y,n_b-m,n_c-1}\binom{m+l_y-l_o}{m}\bbP(\mathcal{C}_{\bfv_\bfn,m,l_y,l_o})\bbP(M_{\bfv_\bfn}\mid\mathcal{C}_{\bfv_\bfn,m,l_y,l_o}).
\end{multline*}
Substituting for each term in the summand and simplifying leads to \eref{eq:ntwonested}, and summing over triallelic configurations $\bfn$ yields \eref{eq:twonested}.

The nonnested case is similar. There are $\binom{n-m-l_y-1}{n_a-l_y,n_b-m,n_c-1}\binom{m+l_y-l_o}{m-1}$ possible $\mathcal{C}_{\bfv_\bfn,m,l_y,l_o}$ \citep[Lemma 4.2]{jen:son:2011:TPB}, each with probability
\[
\bbP(\mathcal{C}_{\bfv_\bfn,m,l_y,l_o}) = \frac{n_a!(n_a-1)!n_b!(n_b-1)!n_c!(n_c-1)!}{n!(n-1)!}l_y(l_y+1)l_o(l_o+1).
\]
The two mutation events land on the correct pair of branches and are in the correct age-order with probability
\begin{align}
\bbP(M_{\bfv_\bfn}^{(b,c)}\mid\mathcal{C}_{\bfv_\bfn,m,l_y,l_o}) &= \bbE\left[\left(\frac{\theta}{2}L_n\right)^2\frac{e^{-\frac{\theta}{2}L_n}}{2!}\cdot 2\frac{T_{m+l_y+1}}{L_n}\frac{T_{l_o+1}}{L_n}\right]\cdot\frac{1}{1+\delta_{m+l_y,l_o}},\label{eq:age-order}\\
&= \frac{\theta^2}{4}\frac{\bbE[T_{m+l_y+1}T_{l_o+1}]}{1+\delta_{m+l_y,l_o}} + O(\theta^3).\notag
\end{align}
The additional factor on the right-hand side of \eref{eq:age-order} accounts for the fact that, should the two mutations arise during the same epoch ($m+l_y+1 = l_o + 1$), then only with probability $1/2$ is the mutation giving rise to the $b$ allele the elder one. Finally,
\begin{multline*}
\bbP(\bfn = n_a\bfe_a + n_b\bfe_b + n_c\bfe_c, \EIInn^{(b,c)}) =P_{ab}P_{ac}\binom{n}{n_a,n_b,n_c}\\
\times\sum_{l_y=1}^{n_a}\sum_{l_o=1}^{l_y+1}\sum_{m=1}^{n_b}\binom{n-m-l_y-1}{n_a-l_y,n_b-m,n_c-1}\binom{m+l_y-l_o}{m-1} \bbP(\mathcal{C}_{\bfv_\bfn,m,l_y,l_o})\bbP(M^{(b,c)}_{\bfv_\bfn}\mid\mathcal{C}_{\bfv_\bfn,m,l_y,l_o}),
\end{multline*}
which yields \eref{eq:ntwononnested} after substituting and simplifying. Summing over triallelic configurations $\bfn$ yields \eref{eq:twononnested}.
\end{proof}
\begin{proof}[Proof of \thmref{thm:sim}.]
	Conditional on the event $E'_1$, the single simultaneous mutation event occurs uniformly on the $n-1$ coalescence nodes of the coalescent tree. Hence, immediately prior to (i.e.~more recently than) the coalescence event at which the mutation event occurred, there are $L$ lineages ancestral to the present-day sample, with $L\sim\text{Uniform}\{2,\ldots,n\}$. Given $L=l$, it is well known \citep[e.g.][]{gri:tav:1998} that the distribution of the number of leaves subtending each of these $l$ lineages is uniform on the $\binom{n-1}{l-1}$ possible compositions of $n$. Of these, $\binom{n_a-1}{l-3}$ compositions have $n_b$ leaves and $n_c$ leaves respectively subtending the two lineages coalescing into the node that experiences the simultaneous mutation event. Hence,
	\[
	\bbP(\bfn = n_a\bfe_a + n_b\bfe_b + n_c\bfe_c \mid E'_1, \{L=l\}) = Q_{a,\{b,c\}}\frac{\binom{n_a-1}{l-3}}{\binom{n-1}{l-1}},
	\]
for $3\leq l \leq n$ (and is $0$ otherwise). 
Also note that
\[
\bbP(O_3\mid E_1') = \frac{n-2}{n-1}.
\]
This is the probability that the simultaneous mutation event does not occur at the oldest of the coalescence nodes ($l = 2$), which would lead to a sample containing no copies of the ancestral allele.

Putting all this together we have
\begin{align*}
	\bbP(\bfn = n_a\bfe_a + n_b\bfe_b + n_c\bfe_c \mid O_3, E'_1) &= \frac{\bbP(\bfn = n_a\bfe_a + n_b\bfe_b + n_c\bfe_c , O_3\mid E'_1)}{\bbP(O_3 \mid E'_1)},\\
	&= \frac{\sum_{l=2}^n \bbP(L = l\mid E'_1)\bbP(\bfn = n_a\bfe_a + n_b\bfe_b + n_c\bfe_c ,O_3\mid E'_1,\{L=l\})}{\bbP(O_3 \mid E'_1)},\\
	&= \frac{n-1}{n-2}\sum_{l=3}^n \frac{1}{n-1}\cdot Q_{a,\{b,c\}}\binom{n_a-1}{l-3}\binom{n-1}{l-1}^{-1},
\end{align*}
which simplifies to equation \eref{eq:sim}.
\end{proof}

\newpage
\bibliographystyle{myplainnat}
\bibliography{master}
\newpage
\setcounter{page}{1}               
\pagestyle{fancy}                        
\renewcommand{\headrulewidth}{0pt} 
\renewcommand{\footrulewidth}{0pt}      
\renewcommand{\thefigure}{S\arabic{figure}}
\renewcommand{\thetable}{S\arabic{table}}     
\renewcommand{\thesection}{\arabic{section}}
\setcounter{figure}{0}
\setcounter{table}{0}
\fancyhead{}  
\fancyfoot[LR]{}
\fancyfoot[C]{\thepage\ SI} 
\begin{center}
{\Large \bf Supporting Information}

\vspace{5mm}                                 
\begin{spacing}{1.5}
{\Large \bf General triallelic frequency spectrum under demographic models with variable population size}
\end{spacing}

\vspace{0.5cm}
%
%
%
%
{Paul A. Jenkins,\hspace{5mm} Jonas W. Mueller, \hspace{5mm} Yun S. Song}
\end{center}

\section{Tests for the existence of a mechanism of simultaneous mutation}

In the main text we describe an analysis of the simultaneous mutation mechanism under a population model in which there is random mating and for which the population size varies in time. In this supplement we investigate the effect of varying our modelling assumptions on the conclusions of this test.

\subsection{Test for a population of constant size.} Consider a simpler model in which $N_t \equiv N_0$. We applied our likelihood ratio test to the data described in the main text, and obtained $\ln\widetilde{\Lambda} = 5.71$ and $\widehat{\lambda} = 0.21$. Because we have obtained closed-form expressions for $\widetilde{\phi}_0$, $\widetilde{\phi}_\text{S}$ and therefore also $\widetilde{\phi}_\lambda$, it is straightforward to compute the whole likelihood curve for $\lambda$ (\fref{fig:likelihood}). We estimated a 95\% confidence interval of $[0.07, 0.35]$ for $\lambda$ by parametric bootstrap. In other words, we simulated a complete dataset of $M = 96$ triallelic sites with the three observed alleles $a_i$, $b_i$, $c_i$, and the sample size $n^{(i)}$, fixed to match their observed values at the $i$th site, for each $i=1,\ldots,M$. Sample counts were then drawn from $\widetilde{\phi}_{\widehat{\lambda}}$ for each site, and a bootstrapped MLE, $\widehat{\lambda}^*$, was computed for this simulated sample. The entire procedure was repeated 1000 times to obtain an empirical distribution $\{\widehat{\lambda} - \widehat{\lambda}_k^*\}_{k=1,\ldots,1000}$, which serves as an approximation of the distribution of $\lambda - \widehat{\lambda}^*$. The confidence interval is obtained from the central 95th percentile of this distribution (centered around $\widehat{\lambda}$) (\fref{fig:likelihood}). The upper limit of our confidence interval for $\widehat{\lambda}$ is slightly lower than an estimate of $\lambda \approx 0.5$ by \citet{hod:eyr:2010}, who used a different method but also assumed a population of constant size in their calculation. 

We further estimated the null distribution of $\widetilde{\Lambda}$ by bootstrap simulation; because the parameter of interest is not necessarily an interior point of the interval $[0,1]$, the usual $\chi^2_1$ approximation does not apply. We used the same procedure for the simulation of bootstrap samples as described above, except this time drawing sample counts from $\widetilde{\phi}_0$ rather than $\widetilde{\phi}_{\widehat{\lambda}}$. This method yielded an empirical distribution of test statistics, $\{\ln\widetilde{\Lambda}_k\}_{k=1,\ldots,1000}$ (\fref{fig:Lambda}), whose 95th\% percentile tail 
provides an appropriate $p$-value. In fact, the largest empirical $\widetilde{\Lambda}$ value was 4.03, much less than our observed value. From this we conclude that $p < 0.001$ and we reject the null hypothesis that $\lambda = 0$. This is consistent with $\lambda = 0$ lying outside our 95\% confidence interval for $\widehat{\lambda}$.

One important point to note is that nine of the 96 sites in the data overwhelmingly contribute to the significance of the test statistic. These nine sites are the SNPs for which $n_b = n_c = 1$. 
Using our formulas for the sampling frequency spectrum under the two possible mutation mechanisms, we have $\widetilde{\phi}_\text{S}(n^{(i)}-2,1,1) > 0.33$ and $\widetilde{\phi}_0(n^{(i)}-2,1,1) < 0.04$ across all sites $i$ with this configuration. 
Thus the sample counts observed at these sites are roughly more than 10 times more likely under the simultaneous mutation mechanism.  However, it could be argued that these highly influential doubly-singleton triallelic sites are also prime candidates for being the result of sequencing error. To examine the dependence of the conclusion of our test on these doubly-singleton sites, we did the following. We first ranked the 9 sites from most influential to least influential, based on the ratio
\[
\frac{\widetilde{\phi}_\text{S}(n^{(i)}-2,1,1)}{\widetilde{\phi}_0(n^{(i)}-2,1,1)},
\]
(where $n^{(i)}$ is the sample size of the $i$th site). We then removed the topmost influential sites in succession and re-computed the test statistics via the previous methodology. Results are shown in \tref{tab:doublysingletons}.

From the table we can see that for our test statistic to lose its significance at the 5\% level, we must remove six or more of the most influential sites. Is it reasonable that at least six of these nine sites are the result of sequencing errors? One of the attractive aspects of the SeattleSNPs project and Environmental Genome Project datasets is that they are resequenced to a high quality, with each SNP confirmed in multiple reactions. Assuming a quality score of $Q > 25$, the probability of a base-calling error in two independent reactions is $10^{-5}$ \citep{hod:eyr:2010}, and it therefore seems reasonable to conclude that it is very unlikely that the presence of more than five of these nine sites are the result of sequencing error. Thus, we reject the null hypothesis that the mixture parameter $\lambda = 0$ even if we allow for at least some of these sites to have resulted from sequencing errors.

In order to test the robustness of our results to the choice of $\bfP$, we also redid our analysis with a ``noninformative'' choice of $\bfP$ given by
\begin{equation*}
\bfP = \;\bordermatrix{\phantom{AA} & \text{A} & \text{T} & \text{C} & \text{G}\cr
                \text{A} & 0 & \frac{1}{3} & \frac{1}{3} & \frac{1}{3}\cr
                \text{T} & \frac{1}{3}  & 0 & \frac{1}{3} & \frac{1}{3} \cr
                \text{C} & \frac{1}{3} & \frac{1}{3} & 0 & \frac{1}{3}\cr
                \text{G} & \frac{1}{3}  & \frac{1}{3} & \frac{1}{3} & 0}.
\end{equation*}
This time we find $\widehat{\lambda} = 0.22$ and $\ln\widetilde{\Lambda} = 7.02 > 6.15 = \max_k\ln\widetilde{\Lambda}^*_k$, and thus $p < 0.001$ as before. We conclude that our test exhibits good robustness to the perturbations in the exact specification of $\bfP$, at least in the direction of the alternative choice of $\bfP$ investigated here.

\subsection{Test for a subdivided population of nonconstant size with migration.} To extend our test for the mechanism that gives rise to triallelic sites to a subdivided population with migration, we proceed by simulation. There are 
two complications which make this approach more difficult. First, the sample space for the joint frequency spectrum is over collections of vectors, where each vector in the collection records the allele counts in one of the subpopulations. This vastly increases the size of the space over which simulation must be performed, dramatically increasing computation time. Second, to compute the frequency spectrum under this model requires all of our samples to be subpopulation-labelled. As we describe below, this is far from the case for data from SeattleSNPs and NIEHS. Discarding samples whose subpopulation membership is unknown reduces the sample size and hence the power of any test.

We now describe our Monte Carlo procedure for approximating the joint frequency spectrum of (subpopulation-labelled) samples from a subdivided population. It is based on an algorithm of \citet{hud:2001:G} who was interested in two linked diallelic SNPs rather than a single triallelic SNP; we give a brief description and refer the reader there for further details. First consider generalizing $\widetilde{\phi}_0(n_a,n_b,n_c)$ to a subdivided population, which we write in the form $\widetilde{\phi}_0(\bfm)$ with $\bfm = (n_a^{[u]},n_b^{[u]},n_c^{[u]})_{u=1,\ldots,U}$ denoting the collection of sample counts for each subpopulation, $u=1,\ldots,U$. (We focus on a single site, and the superscript refers to the subpopulation label rather than a site index. The site is assumed to be triallelic in the combined sample, but it need not be triallelic within each subpopulation.) In order to take a Monte Carlo approximation we first write the relevant sampling probabilities in the following form:
\begin{align*}
	\bbP(\bfm, O_3^{(a,b,c)}) &= \bbE\left[\sum_{j,k}\bbI\{\bfm,\mathbf{\tau},j,k\}(1-e^{-\theta l_j/2})(1-e^{-\theta l_k/2})e^{-\theta(L_n - l_j - l_k)/2}P_{a(j)b}P_{a(k)c}\right],\\
	&= \frac{\theta^2}{4}\bbE\left[\sum_{j,k}\bbI\{\bfm,\mathbf{\tau},j,k\}l_jl_kP_{a(j)b}P_{a(k)c}\right] + O(\theta^3).
\end{align*}
Here the expectation is taken with respect to random genealogies drawn according to the given complex demographic history. The function $\bbI\{\bfm,\mathbf{\tau},j,k\}$ is an indicator that ensures we count only those genealogies for which a mutation on branch $j$ from allele $a(j)\mapsto b$ and a mutation on branch $k$ from allele $a(k)\mapsto c$ gives rise to the configuration $\bfm$. The notation $a(\cdot)$ is taken to mean ``the allele at the ancestral node of this branch''. The summation is a double-summation over all branches of $\mathbf{\tau}$, which considers each possible placement of the mutations giving rise to allele $b$ and allele $c$. 
Branch $j$ has length $l_j$, branch $k$ has length $l_k$, and the total tree length is $L_n$. Ignoring terms of $O(\theta^3)$ discards the possibility of more than two mutations, as usual. 

Similarly, we can write
\[
	\bbP(O_3^{(a,b,c)}) = \frac{\theta^2}{4}\bbE\left[\sum_{j,k}\bbI\{O_3^{(a,b,c)},\mathbf{\tau},j,k\}l_jl_kP_{a(j)b}P_{a(k)c}\right] + O(\theta^3),
\]
where $\bbI\{O_3^{(a,b,c)},\mathbf{\epsilon},j,k\}$ is an indicator for the event that a mutation on branch $j$ from $a(j)\mapsto b$ and a mutation on branch $k$ from $a(k)\mapsto c$ gives rise to a triallelic sample comprising the three alleles $a$, $b$, and $c$. Finally, we have that
\begin{align*}
	\widetilde{\phi}_0(\bfm) = \lim_{\theta\to 0}\bbP(\bfm|O_3^{(a,b,c)}) &= \lim_{\theta\to 0}\frac{\bbP(\bfm,O_3^{(a,b,c)})}{\bbP(O_3^{(a,b,c)})},\\
	&= \frac{\bbE\left[\sum_{j,k}\bbI\{\bfm,\mathbf{\tau},j,k\}l_jl_kP_{a(j)b}P_{a(k)c}\right]}{\bbE\left[\sum_{j,k}\bbI\{O_3^{(a,b,c)},\mathbf{\tau},j,k\}l_jl_kP_{a(j)b}P_{a(k)c}\right]},\\
	&\approx \frac{\frac{1}{\cN}\sum_{i=1}^\cN\left[\sum_{j,k}\bbI\{\bfm,\mathbf{\tau}^{(i)},j,k\}l^{(i)}_jl^{(i)}_kP_{a(j)b}P_{a(k)c}\right]}{\frac{1}{\cN}\sum_{i=1}^\cN\left[\sum_{j,k}\bbI\{\bfm,\mathbf{\tau}^{(i)},j,k\}l^{(i)}_jl^{(i)}_kP_{a(j)b}P_{a(k)c}\right]}.
\end{align*}
The final step replaces each of the numerator and denominator with a Monte Carlo sample of $\cN$ simulated genealogies, $\mathbf{\tau}^{(1)},\ldots,\mathbf{\tau}^{(\cN)}$ (which can be reused in both numerator in denominator, though this may introduce some bias). 

The frequency spectrum under the simultaneous mutation mechanism can be obtained in a similar manner. We write
\[
	\bbP(\bfm, O_3^{(a,b,c)}|E_1') = \bbE\left[\sum_{v\in V_\tau} \frac{1}{2}[\bbI\{\bfm,\mathbf{\tau},v,b,c\} + \bbI\{\bfm,\mathbf{\tau},v,c,b\}]\frac{1}{n-1}Q_{a,\{b,c\}}\right],
\]
summing over all coalescence vertices $V_\tau$ of $\mathbf{\tau}$. 
Then vertex $v$ is chosen with probability $1/(n-1)$, and the simultaneous mutation at $v$ produces the correct derived alleles with probability $Q_{a,\{b,c\}}$. The factor of $1/2$ accounts for the two possible assignments of the alleles $b$ and $c$ to the branches descending from the simultaneous mutation event, and there is an indicator corresponding to each of these assignments, which is 1 if the configuration at the leaves of the genealogical tree is $\bfm$ and zero otherwise. Finally, we obtain a Monte Carlo approximation of $\widetilde{\phi}_\text{S}$ by noting that placing the simultaneous mutation event at the root vertex does not give rise to a triallelic site, so
\[
\bbP(O_3^{(a,b,c)}|E_1') = \frac{n-2}{n-1}Q_{a,\{b,c\}},
\]
(see the proof of Theorem 4
), and hence
\begin{align*}
	\widetilde{\phi}_\text{S}(\bfm) 
	&=  \frac{\bbP(\bfm,O_3^{(a,b,c)}|E_1')}{\bbP(O_3^{(a,b,c)}|E_1')}
	= \frac{1}{n-2}\bbE\left[\sum_{v\in V_\tau} \frac{1}{2}[\bbI\{\bfm,\mathbf{\tau},v,b,c\} + \bbI\{\bfm,\mathbf{\tau},v,c,b\}]\right],\\
	&\approx \frac{1}{\cN}\sum_{i=1}^\cN \frac{1}{n-2}\sum_{v\in V_{\tau^{(i)}}}\frac{1}{2}[\bbI\{\bfm,\mathbf{\tau}^{(i)},v,b,c\} + \bbI\{\bfm,\mathbf{\tau}^{(i)},v,c,b\}].
\end{align*}

The above method is extremely general, in the sense that it requires only that one is able to simulate genealogical trees from the assumed model. Indeed, \citet{hud:2001:G} did not consider population subdivision, yet, other than the simulation of genealogical histories from \texttt{ms}, the details of the algorithm are unchanged when subdivision is also considered.

To account for population subdivision, we assumed as the true demographic model that inferred by \citet{gut:etal:2009}. Their method extended that of \citet{wil:etal:2005} to allow for an ancestral population to split into modern subpopulations with rare but continuous migration between subpopulations and each subpopulation allowed to change in size with time. Using data from three HapMap populations: 12 Yoruba individuals from Ibadan, Nigeria (YRI); 22 CEPH Utah residents with ancestry from northern and western Europe (CEU); and 12 Han Chinese individuals from Beijing, China (CHB); \citet{gut:etal:2009} fit a model governing the historical relationship of the three populations during the human expansion out of Africa. The data from SeattleSNPs and the Environmental Genome Project comprise samples from panels made up of individuals in HapMap populations and samples from other panels. Individuals in the latter are classified as of European, African-American, Asian, or Hispanic descent. For simplicity and following \citet{hod:eyr:2010}, we focus on a model only for an African and a non-African (European) population, excluding Asian and Hispanic samples from our analysis. To obtain subpopulation labels \citep[not provided in][]{hod:eyr:2010} we reanalyzed the data available from SeattleSNPs and the Environmental Genome Project, excluding sites in coding regions and so on as described above. This time we excluded from each site any individuals for whom subpopulation labels were not given, leaving $M = 34$ triallelic sites in total.

We applied the Monte Carlo method described above for approximating the frequency spectrum under the null and alternative models of mutation, given the complex demographic history of \citet{gut:etal:2009}. The Monte Carlo error at sites whose sample configurations were of very low probability could unduly affect the overall likelihood ratio statistic, so we varied the Monte Carlo sample size $\cN$, from $10^6$ to $10^{10}$ for different sites, according to a preliminary estimate of the magnitude of the probability of their sample configuration. (As an extreme example, sites with configurations of very low probability could potentially have this probability estimated as 0, which would also cause the overall likelihood ratio statistic to be 0.)
Using this approach we obtained $\ln\widetilde{\Lambda} = 2.01$ ($p \approx 0.42$) and $\widehat{\lambda} = 0.16$. Due to the computationally-intensive nature of the simulation, the $p$-value is based on only a very coarse approximation of the null distribution of $\ln\widehat{\Lambda}$ using 50 bootstrapped test statistics (\fref{fig:Lambda-subdivision}); nonetheless, it is clearly non-significant at the 5\% level.

%
\section{Supporting Figures and Table}

\begin{table}[h]
    \caption{The effect on the test statistic of triallelic sites for which both derived alleles are singletons. As more and more of these sites are removed from the data in succession, the mixture parameter MLE, $\widehat{\lambda}$, and the significance of the test statistic, gradually decrease.}
\label{tab:doublysingletons}
\centering
{
\begin{tabular}{ccc}\hline
Number of sites\\[0pt]
 removed & $\widehat{\lambda}$ & $p$-value\\ \hline
0 & 0.204 & $\leq$0.001 \\
1 & 0.189 & $\leq$0.001 \\
2 & 0.171 & $\leq$0.003 \\
3 & 0.153 & $\leq$0.006 \\
4 & 0.133 & $\leq$0.013 \\
5 & 0.113 & $\leq$0.036 \\
6 & 0.090 & $\leq$0.090 \\
7 & 0.064 & $\leq$0.161 \\
8 & 0.036 & $\leq$0.283 \\
9 & 0.002 & $\leq$0.461 \\
\hline
\end{tabular}
}
\end{table}

\begin{figure}[h]
\centering
\includegraphics[width=0.85\textwidth]{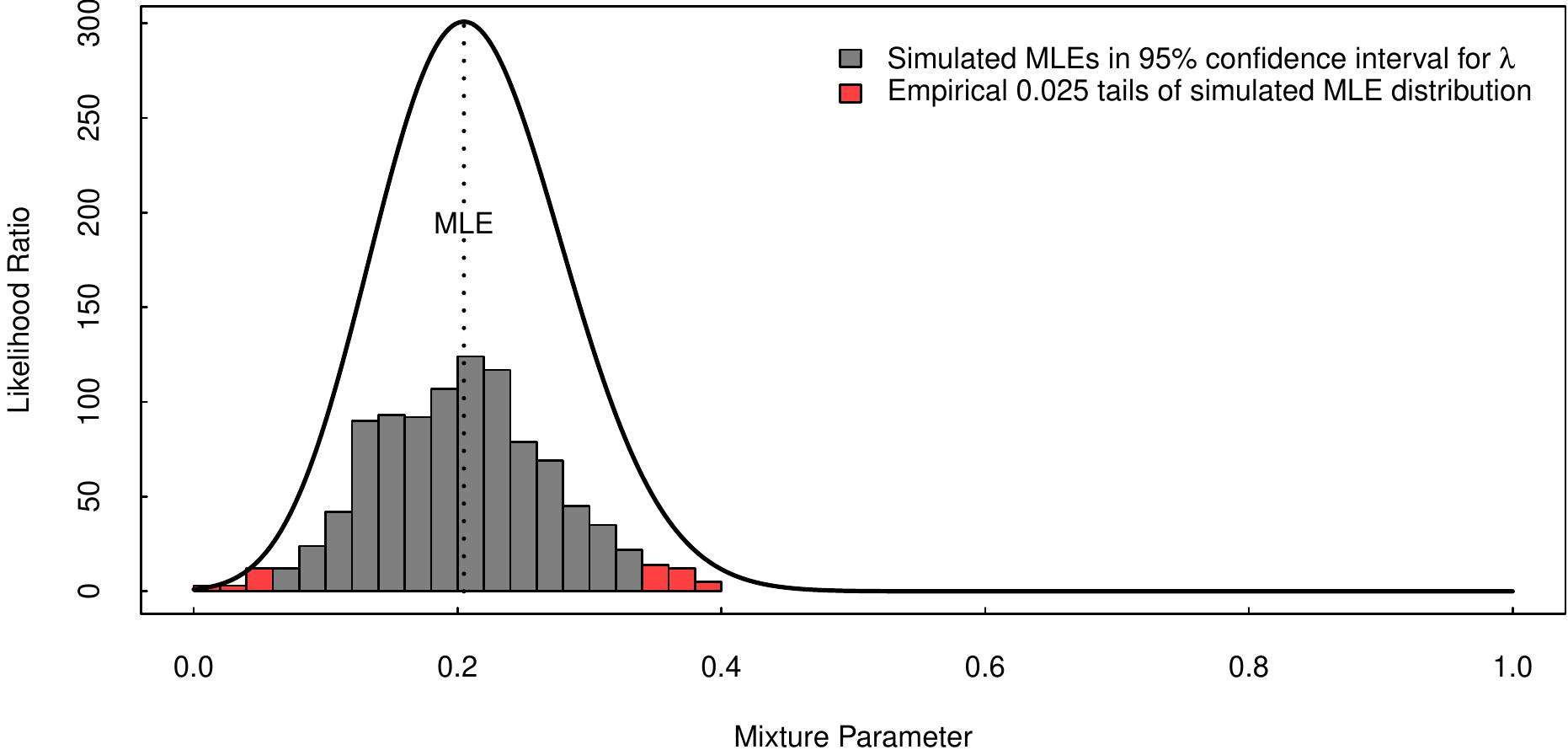}
\caption{\label{fig:likelihood}Likelihood function for $\lambda$, relative to its value at $\lambda=0$, under the assumption of a panmictic population of constant size. The vertical dashed line shows the location of the maximum likelihood estimate $\widehat{\lambda}$, and the histogram beneath the curve illustrates the empirical distribution of our simulated maximum likelihood estimates $\{\widehat{\lambda}^*\}_{k=1,\ldots,1000}$. One can see that the distribution of $\widehat{\lambda}^*$ strongly reflects the shape of the likelihood function.}
\end{figure}

\begin{figure}[h]
\centering
\includegraphics{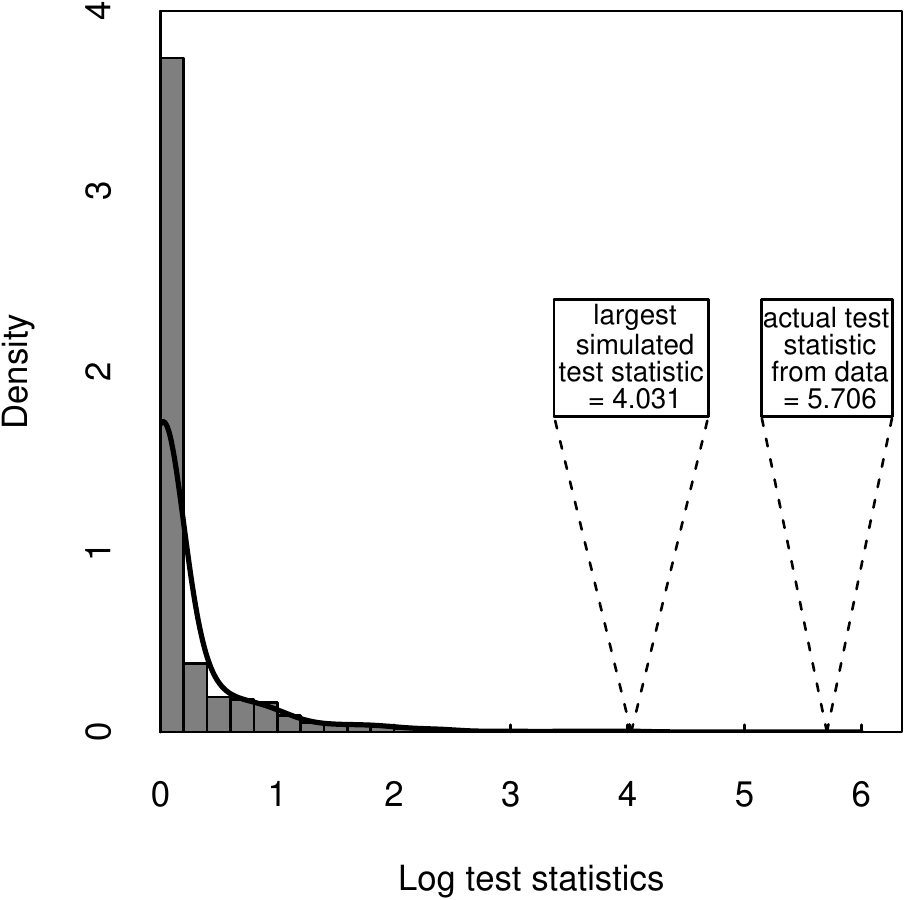}
\caption{\label{fig:Lambda}Empirical null distribution of $\ln\widetilde{\Lambda}$. A Gaussian kernel density estimate is super-imposed over the histogram of $\ln\widetilde{\Lambda}$, and the dashed lines indicate the locations of the largest of these values as well as the $\ln\widetilde{\Lambda}$ value from our data.} 
\end{figure}

\begin{figure}[h]
\centering
\includegraphics{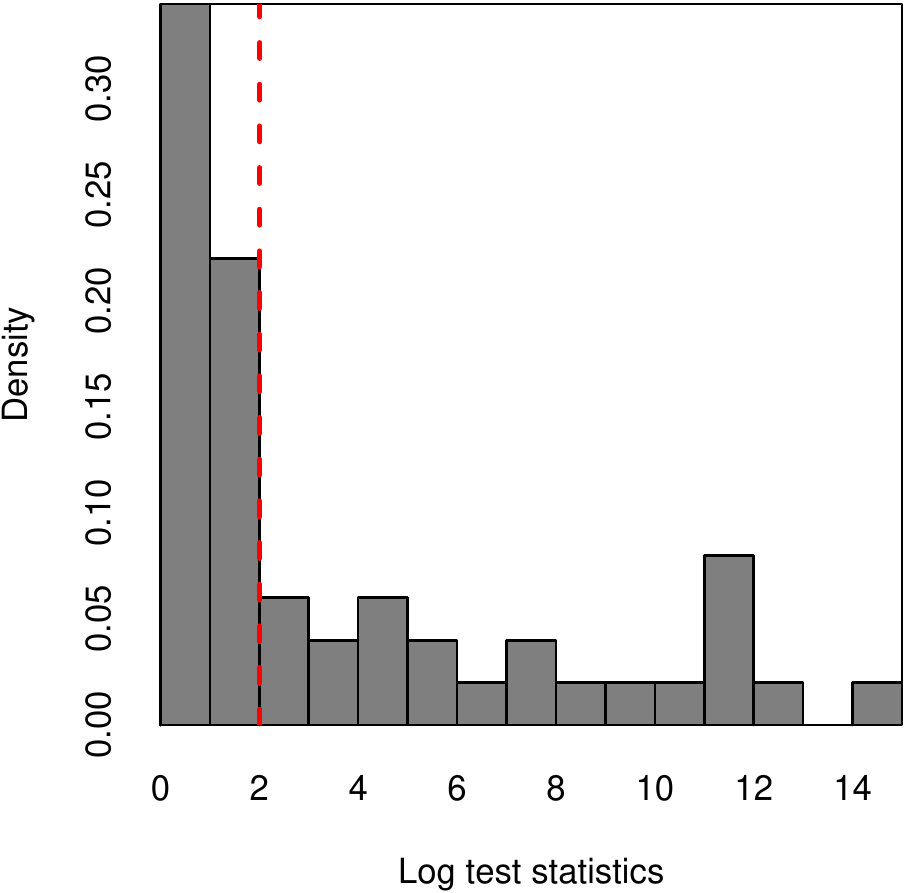}
\caption{\label{fig:Lambda-subdivision}Empirical distribution of $\ln\widetilde{\Lambda}$ under a null model incorporating complex demography. The dashed line indicated the value of the test statistic computed from the actual data, $\ln\widetilde{\Lambda} = 2.01$.}
\end{figure}

\end{document}